\documentclass[11pt]{article}
\usepackage[a4paper,margin=1in]{geometry}
\usepackage{authblk}
\usepackage{setspace}
\RequirePackage{amsthm,amsmath,amsfonts,amssymb}
\RequirePackage[authoryear]{natbib}
\RequirePackage{graphicx}
\RequirePackage{mathtools}
\RequirePackage{bm}
\RequirePackage{booktabs}
\RequirePackage{multirow}
\RequirePackage{enumitem}
\RequirePackage[hidelinks]{hyperref}

\theoremstyle{plain}
\newtheorem{theorem}{Theorem}
\newtheorem{proposition}{Proposition}
\newtheorem{lemma}{Lemma}
\newtheorem{corollary}{Corollary}
\theoremstyle{definition}
\newtheorem{definition}{Definition}
\newtheorem{assumption}{Assumption}
\newtheorem{remark}{Remark}

\DeclareMathOperator{\E}{\mathbb{E}}
\DeclareMathOperator{\Var}{Var}

\newcommand{\R}{\mathbb{R}}
\newcommand{\N}{\mathcal{N}}
\newcommand{\C}{\mathrm{C}^{+}}
\newcommand{\bx}{\bm{x}}
\newcommand{\bbeta}{\bm{\beta}}
\newcommand{\half}{\tfrac{1}{2}}
\newcommand{\ind}{\mathbf{1}}

\begin{document}

\title{Horseshoe Priors for Spatial Small Area Estimation:\
Regular Variation, Tail Robustness, and Deep Learning}

\author[1]{Dhiman Bhadra}
\author[2]{Nicholas G. Polson}
\affil[1]{Indian Institute of Management Ahmedabad}
\affil[2]{Booth School of Business, University of Chicago}
\date{}
\maketitle

\setlength{\parindent}{15pt}
\setlength{\parskip}{0pt}

\begin{abstract}
Small area estimation borrows strength across domains to repair the poor
precision of direct survey estimators. Two philosophies dominate the
area-level literature. The first, descending from \citet{ghoshrao1994} and
realized at scale in the spatial program of \citet{mercer2015} and
\citet{wakefield2019}, borrows strength through \emph{structured} Gaussian
smoothing: an intrinsic conditional autoregression or its BYM2
reparameterization pools each area towards its geographic neighbours. The
second borrows strength \emph{globally} but acts \emph{locally} through a
heavy-tailed global--local prior on the area effects, of which the horseshoe
of \citet{cps2010} is the canonical instance; \citet{tang2018} first brought
this idea to small area estimation. We study the horseshoe Fay--Herriot model
in the realistic regime of \emph{known and unequal} sampling variances and make
four contributions. First, a tail-robustness theorem: through a heteroscedastic
Tweedie identity the posterior mean leaves strongly signalled areas essentially
unshrunk, bounding the influence of an outlying direct estimate, in sharp
contrast to the linear, unbounded shrinkage of Gaussian random-effect models.
Second, standardizing by the known design variances transfers the minimax
contraction and credible-set theory of the homoscedastic sequence model to the
heteroscedastic Fay--Herriot problem; the posterior contracts at the
nearly-black minimax rate, with a matching lower bound that fixes the sharp
constant. Third, we give a precise account of when structured smoothing and
global--local shrinkage each win. Fourth, an $O(m)$ Gibbs sampler, a controlled
study, and an analysis of the Scottish lip cancer data confirm the account: on
that strongly spatial data the smoother predicts held-out districts best, yet the
horseshoe shrinkage coefficient flags the exceptional high- and low-incidence
districts that smoothing suppresses. We also connect the horseshoe to a
data-augmentation approach to Bayesian quantile regression and to deep quantile
networks for sharp spatial structure. Throughout we argue, following the
regular-variation theory of \citet{bdpw2016}, that these properties make the
horseshoe a sound \emph{default} prior for area effects: it borrows strength
aggressively yet lets genuinely exceptional areas speak for themselves, with no
tuning and no neighbourhood graph.
\end{abstract}

\noindent\textbf{Keywords:} Bayesian shrinkage; default prior; Fay--Herriot model; global--local prior; horseshoe; minimaxity; posterior contraction; quantile regression; regular variation; spatial smoothing; Tweedie's formula.

\section{Introduction}\label{sec:intro}

Demand for reliable statistics at fine geographic and demographic resolution has
far outstripped the sample sizes that surveys allocate to individual domains.
The \emph{direct} estimator for an area, computed from that area's own sample,
is approximately design-unbiased but so variable as to be useless when the
within-area sample is a handful of units or empty. Small area estimation (SAE)
repairs this by \emph{borrowing strength}: a model links the areas, and each
area's estimate is pulled towards a prediction formed from auxiliary
information and from the other areas. The authoritative appraisals of
\citet{ghoshrao1994}, \citet{rao2003} and \citet{ghosh2020} trace how synthetic
estimators, empirical best linear unbiased prediction (EBLUP), and empirical
and hierarchical Bayes methods came to organize the field, almost always
through the area-level model of \citet{fayherriot1979} or the unit-level model
of \citet{bhf1988}.

How strength is borrowed, and not merely that it is borrowed, is the substance
of a method. The area-level literature has settled into two broad philosophies,
which this paper places side by side and then separates theoretically.

\paragraph{Structured smoothing.}
The first philosophy treats the area effects as a smooth latent surface and
pools each area towards its neighbours. Its modern realization is the spatial
program developed in Seattle by Jon Wakefield and collaborators for global
health: \citet{mercer2015} marry design-based direct estimators with a
Bayesian space--time smoother for child mortality, \citet{wakefield2019}
develop the methodology for under-five mortality across the developing world,
and the \texttt{SUMMER} package \citep{summer} has made the approach a standard
tool for the Demographic and Health Surveys. The latent model is the
Besag--York--Molli\'e (BYM) prior \citep{besag1991}, an intrinsic conditional
autoregression (ICAR) plus an unstructured Gaussian term, most often used in
the \citet{riebler2016} BYM2 reparameterization that separates the total
random-effect variance from the fraction that is spatial, and fitted by INLA
\citep{rue2009}. Structured smoothing is powerful precisely when the truth is
spatially coherent: information flows along the adjacency graph, and a sparsely
sampled area inherits the level of its well-sampled neighbours.

\paragraph{Global--local shrinkage.}
The second philosophy makes no spatial assumption. It places a heavy-tailed
\emph{global--local} prior on the area effects: a single global scale pulls all
effects towards the regression surface, while heavy-tailed local scales let
individual areas escape that pull. The horseshoe prior \citep{cps2010,ps2011}
is the canonical instance, and \citet{tang2018} first cast it as a model for the
Fay--Herriot random effects, with multivariate and spatial extensions following
\citep{tangghosh2023}. Such priors borrow strength globally---most areas are
shrunk hard towards the synthetic prediction---yet act locally, leaving the few
genuinely deviant areas almost untouched. This is the natural model when
departures from the regression surface are \emph{sparse}: a minority of areas
behave anomalously and the rest are well explained by covariates. It also
furnishes a continuous answer to the question raised by \citet{datta2011} and
\citet{datta2015} of whether every area needs a random effect at all.

\paragraph{Contributions.}
Existing global--local work for small area estimation
\citep{tang2018,tangghosh2023} establishes methodology and posterior
consistency, but the practically decisive feature of the Fay--Herriot
setting---that the sampling variances $D_i$ are \emph{known and unequal}, being
design variances of the direct estimators---has not been turned to theoretical
advantage, nor has the horseshoe been squarely compared with the
structured-smoothing program it competes with in practice. Treating the known
$D_i$ as the lever rather than a nuisance, we make four contributions.
\begin{enumerate}[leftmargin=1.9em,itemsep=3pt,topsep=3pt]
\item \textbf{Bounded influence (Section~\ref{sec:robust}).} A heteroscedastic
Tweedie--Brown identity gives the posterior mean a \emph{redescending} score,
$\E(\theta_i\mid y_i)=y_i-2D_i/(y_i-\bx_i^\top\bbeta)+o(|y_i|^{-1})$, so a
strongly signalled area is returned essentially unshrunk and its
gross-error sensitivity is finite (Theorem~\ref{thm:robust},
Corollary~\ref{cor:influence}). Gaussian random-effect models---EBLUP and BYM2
alike---apply a fixed linear shrinkage whose sensitivity to a corrupted direct
estimate is unbounded. The property rests only on a regularly varying prior tail,
so it holds across the global--local family, not just the horseshoe
(Section~\ref{sec:family}).

\item \textbf{Sharp heteroscedastic minimaxity (Section~\ref{sec:minimax}).}
Standardizing by the known $D_i$ reduces the area-level model to a unit-variance
sequence model and transfers the contraction theory of \citet{vdp2014,vdp2017}
to the Fay--Herriot problem. Over nearly-black classes the posterior contracts at
the minimax rate (Theorem~\ref{thm:contract}); a matching lower bound fixes the
exact constant in the $D$-weighted loss (Proposition~\ref{prop:lower}); the
contraction is achieved adaptively, with no oracle knowledge of the number of
exceptional areas, by either an empirical-Bayes or a half-Cauchy global scale
(Corollary~\ref{cor:adapt}); and the credible balls are honest under the usual
excessive-bias conditions. The known design variances are exactly what make a
heavy-tailed prior provably minimax here.

\item \textbf{Separation of the two philosophies (Sections~\ref{sec:compare}--%
\ref{sec:df}).} We prove that structured smoothing is rate-superior under a
smooth spatial truth while global--local shrinkage is near-minimax under a sparse
truth, where the linear smoother carries an irreducible oversmoothing bias
growing with the spike magnitude (Proposition~\ref{prop:sep}). A Stein
unbiased-risk analysis explains the mechanism: the horseshoe spends a
\emph{data-adaptive} budget of effective degrees of freedom, allocating them only
to the areas it declines to shrink, whereas EBLUP and BYM2 spend a fixed budget
set by their variance components (Proposition~\ref{prop:df}).

\item \textbf{Computation and evidence (Sections~\ref{sec:comp}--\ref{sec:app}).}
An $O(m)$-per-sweep Gibbs sampler, a three-regime Monte Carlo study, and a
magnitude sweep that exhibits the predicted crossover---the smoothing-to-shrinkage
risk ratio rising without saturating as areas become more exceptional---together
with an analysis of the Scottish lip cancer data in which structured smoothing
predicts held-out districts best, as the theory predicts for a smoothly spatial
truth, while the posterior shrinkage coefficient simultaneously flags the
exceptional districts as a calibrated screening score.
\end{enumerate}

\paragraph{Relation to existing work.}
Our starting point is the global--local program of \citet{tang2018}, who
introduced the horseshoe as a prior for the Fay--Herriot effects and established
posterior consistency and good empirical behaviour, and its spatial elaboration
\citep{tangghosh2023}. That work treats the sampling variances as nuisance
constants to be conditioned on; we argue that they are instead the lever that
makes a sharp frequentist analysis possible. The standardization in
Section~\ref{sec:minimax} is, to our knowledge, the first transfer of the
sequence-model contraction theory of \citet{vdp2014,vdp2017} to the
heteroscedastic Fay--Herriot model, and it yields minimax-rate and honest
credible-set statements rather than consistency alone. The Tweedie/Brown
calculation in Section~\ref{sec:robust} is classical \citep{efron2011}; what is
new is its heteroscedastic form \eqref{eq:brown} and the resulting
bounded-influence characterization that cleanly separates the horseshoe from
every Gaussian random-effects competitor. On the applied side, the program of
\citet{mercer2015,wakefield2019} and the \texttt{SUMMER} ecosystem
\citep{summer} have made structured spatial smoothing---the
Besag--York--Molli\'e model and its \citet{riebler2016} reparameterization, fit
by integrated nested Laplace approximation---the default for subnational health
estimation. Rather than proposing the horseshoe as a replacement, we delineate
precisely the regime in which each is preferable (Section~\ref{sec:compare}),
which we view as the more useful contribution to practice.

\section{The Fay--Herriot model and two modes of borrowing strength}
\label{sec:model}

\subsection{Area-level model}
Let $i=1,\dots,m$ index the areas and let $\theta_i$ denote the target
(a mean, rate, or transformed proportion). The direct estimator $y_i$ satisfies
the \emph{sampling model}
\begin{equation}\label{eq:sampling}
y_i = \theta_i + e_i, \qquad e_i \sim \N(0, D_i),
\end{equation}
where the $D_i$ are treated as \emph{known}; in practice they are smoothed
design variances of the direct estimator, often after a variance-stabilizing
transform. The \emph{linking model} expresses the targets through $q$ auxiliary
variables $\bx_i$,
\begin{equation}\label{eq:linking}
\theta_i = \bx_i^\top\bbeta + v_i,
\end{equation}
so that $v_i = \theta_i - \bx_i^\top\bbeta$ is the area effect, the part of
$\theta_i$ unexplained by covariates. Combining \eqref{eq:sampling} and
\eqref{eq:linking} gives the marginal model $y_i = \bx_i^\top\bbeta + v_i + e_i$.
The science of SAE is encoded entirely in the prior assigned to
$\bm v=(v_1,\dots,v_m)$.

\subsection{EBLUP and the Gaussian baseline}
The classical Fay--Herriot specification takes $v_i \sim \N(0,A)$ independently.
With $A$ known, the best predictor of $\theta_i$ is the composite
\begin{equation}\label{eq:eblup}
\tilde\theta_i = \gamma_i\, y_i + (1-\gamma_i)\,\bx_i^\top\bbeta,
\qquad \gamma_i = \frac{A}{A+D_i},
\end{equation}
a convex combination of the noisy direct estimator and the synthetic regression
prediction, with weight $\gamma_i$ increasing in the reliability of $y_i$.
Replacing $(\bbeta,A)$ by estimators yields the EBLUP; \citet{prasadrao1990}
give the second-order mean squared error (MSE) approximation that underlies
inference, and \citet{datta1991} the hierarchical Bayes counterpart. Writing
$g_{1i}(A)=A D_i/(A+D_i)$ for the leading variance term, their approximation is
\begin{equation}\label{eq:prasadrao}
\mathrm{MSE}(\tilde\theta_i)\approx g_{1i}(A)+g_{2i}(A)+g_{3i}(A),
\end{equation}
where $g_{2i}$ accounts for estimating $\bbeta$ and $g_{3i}=D_i^2(A+D_i)^{-3}
\overline{V}(A)$ for estimating $A$, with $\overline{V}(A)$ the asymptotic
variance of the variance estimator. The defining feature of \eqref{eq:eblup} is
that the shrinkage fraction $1-\gamma_i = D_i/(A+D_i)$ is the \emph{same} for
every area at a given $D_i$: the model cannot tell a genuinely deviant area from
a noisy one and shrinks both by the same amount, and \eqref{eq:prasadrao}
inherits this by depending on the data only through the single estimate of $A$.

\subsection{Structured spatial smoothing}\label{sec:spatial}
Wakefield's program replaces the independent Gaussian effects by a spatially
structured Gaussian field. Writing $\bm W=(w_{ij})$ for the area adjacency
matrix and $\bm Q = \mathrm{diag}(\bm W\ind) - \bm W$ for the graph Laplacian,
the BYM model sets
\begin{equation}\label{eq:bym}
\theta_i = \bx_i^\top\bbeta + u_i + \psi_i, \qquad
\bm u \sim \N\!\big(\bm 0,\ \sigma_u^2\,\bm Q^{-}\big),\quad
\psi_i \sim \N(0,\sigma_\psi^2),
\end{equation}
where $\bm Q^{-}$ is a generalized inverse: $\bm u$ is the intrinsic CAR
\citep{besag1991} that borrows from neighbours, and $\psi_i$ is unstructured
heterogeneity. The BYM2 form \citep{riebler2016} reparameterizes
$(\sigma_u^2,\sigma_\psi^2)$ as a total variance and a mixing fraction on a
scaled $\bm Q$, which stabilizes prior elicitation and is the version
implemented for survey data in \texttt{SUMMER} \citep{summer}. Conditionally on
the variances, \eqref{eq:bym} is again a linear smoother of $\bm y$: the fitted
effect at area $i$ is a weighted average of all direct estimates with weights
decaying along graph distance. Strength flows locally and the shrinkage is, as
in EBLUP, a fixed linear operator once the variances are set.

\subsection{Global--local shrinkage}\label{sec:gl}
The horseshoe Fay--Herriot model retains independence across areas but replaces
the single variance $A$ by an area-specific, heavy-tailed scale, the
\emph{shrink globally, act locally} principle of \citet{ps2011}:
\begin{equation}\label{eq:hs}
v_i \mid \lambda_i,\tau \sim \N(0,\ \tau^2\lambda_i^2), \qquad
\lambda_i \sim \C(0,1), \qquad \tau \sim \C(0,1),
\end{equation}
where $\C(0,1)$ is the standard half-Cauchy. The global scale $\tau$ controls
the overall pull towards the regression surface; the local scales $\lambda_i$,
having Cauchy tails, permit individual areas to be exempt from that pull. It is
illuminating to write the implied shrinkage in the standardized residual
$r_i = y_i - \bx_i^\top\bbeta = v_i + e_i$. Conditionally on the scales,
\begin{equation}\label{eq:kappa}
\E(\theta_i \mid r_i, \lambda_i,\tau)
= \bx_i^\top\bbeta + (1-\kappa_i)\, r_i, \qquad
\kappa_i = \frac{D_i}{D_i + \tau^2\lambda_i^2},
\end{equation}
where $\kappa_i\in(0,1)$ is the \emph{shrinkage coefficient}: $\kappa_i\to 1$
returns the synthetic estimate and $\kappa_i\to 0$ returns the direct estimate.
Under the horseshoe the prior on $\kappa_i$ (at $\tau=1$, $D_i=1$) is
$\mathrm{Beta}(\half,\half)$, the symmetric ``horseshoe'' density with poles at
$0$ and $1$: \emph{a priori} an area is expected to be either fully pooled or
fully free, and the data decide which. This is the continuous analogue of the
discrete question---does area $i$ need a random effect?---studied by
\citet{datta2011,datta2015}.

\section{Regular variation and the horseshoe as a default}\label{sec:rv}

Before deriving estimation properties we record the structural property of the
horseshoe that underlies all of them and that makes it, in a precise sense, a
\emph{default} prior for area effects. The relevant notion is regular variation
\citep{bgt1987}.

\begin{definition}[Regular variation]\label{def:rv}
A measurable $f:(0,\infty)\to(0,\infty)$ is \emph{regularly varying at infinity
with index $\rho\in\R$}, written $f\in\mathrm{RV}_\rho$, if
$\lim_{t\to\infty} f(tx)/f(t)=x^{\rho}$ for every $x>0$. Equivalently
$f(t)=t^{\rho}\ell(t)$ with $\ell$ slowly varying ($\ell\in\mathrm{RV}_0$). A
density is \emph{regularly varying} if it is $\mathrm{RV}_{-(1+\alpha)}$ for some
tail index $\alpha>0$, so that it has Pareto-like polynomial tails.
\end{definition}

The decisive fact, established by \citet{bdpw2016}, is that the horseshoe is
regularly varying, whereas the Gaussian and Laplace priors behind ridge and the
Bayesian lasso are not.

\begin{proposition}[Regular variation of the horseshoe; \citealp{bdpw2016}]
\label{prop:rv}
Let $v\mid\lambda\sim\N(0,\tau^2\lambda^2)$ with $\lambda\sim\C(0,1)$. Then the
marginal prior density $p$ of $v$ is symmetric, has a logarithmic pole at the
origin, $p(v)\asymp \log(1/|v|)$ as $v\to0$, and is regularly varying with index
$-2$ at infinity, $p\in\mathrm{RV}_{-2}$; that is, $p(v)=|v|^{-2}\ell(|v|)$ for a
slowly varying $\ell$. Consequently, for any fixed Gaussian noise variance $D>0$
the marginal $m(r)=\int\phi_{\sqrt{D}}(r-v)p(v)\,\mathrm{d}v$ of the observation
is also in $\mathrm{RV}_{-2}$, regular variation being preserved by convolution
with a light-tailed kernel.
\end{proposition}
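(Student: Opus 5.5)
The argument starts from the scale-mixture representation
\[
p(v)=\int_0^\infty \frac{1}{\sqrt{2\pi}\,\tau\lambda}\,e^{-v^2/(2\tau^2\lambda^2)}\,\frac{2}{\pi(1+\lambda^2)}\,\mathrm{d}\lambda .
\]
Symmetry is then immediate, since the integrand depends on $v$ only through $v^2$. Without loss of generality take $\tau=1$; the general case follows by the rescaling $p_\tau(v)=\tau^{-1}p_1(v/\tau)$, which preserves both the log pole and membership in $\mathrm{RV}_{-2}$. For both asymptotic claims I will substitute $s=\lambda/|v|$, which gives
\[
p(v)=\frac{2}{\pi\sqrt{2\pi}}\int_0^\infty \frac{1}{s}\,e^{-1/(2s^2)}\,\frac{1}{1+v^2s^2}\,\mathrm{d}s .
\]
This form isolates the dependence on $v$ in the single factor $(1+v^2s^2)^{-1}$.

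\emph{Tail.} As $|v|\to\infty$, multiply by $v^2$. The integrand becomes $s^{-1}e^{-1/(2s^2)}\,v^2/(1+v^2s^2)$, which increases monotonically to $s^{-3}e^{-1/(2s^2)}$, and this limit is integrable on $(0,\infty)$. Near $0$ the exponential kills every power of $s$, and at infinity $s^{-3}$ is integrable. Monotone convergence therefore gives
\[
v^2p(v)\to \frac{2}{\pi\sqrt{2\pi}}\int_0^\infty s^{-3}e^{-1/(2s^2)}\,\mathrm{d}s=\frac{2}{\pi\sqrt{2\pi}},
\]
since the integral equals $1$ under $u=1/(2s^2)$. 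So $p(v)\sim c\,|v|^{-2}$ with $c=(2\pi^3)^{-1/2}$. In particular $p\in\mathrm{RV}_{-2}$ with $\ell$ asymptotically constant, which is stronger than required.

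\emph{Pole.} As $v\to0$, split the integral at $s=1/|v|$. On $(0,1/|v|)$ the factor $(1+v^2s^2)^{-1}$ lies in $[\tfrac12,1]$, and $\int_1^{1/|v|}s^{-1}\,\mathrm{d}s=\log(1/|v|)$; the piece over $(0,1)$ is $O(1)$. On $(1/|v|,\infty)$ the integrand is at most $v^{-2}s^{-3}$, whose integral is $O(1)$. Together these give $p(v)\asymp\log(1/|v|)$, matching the known Carvalho--Polson--Scott bounds.

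\emph{Convolution.} Write $m(r)=\E\,p(r-Z)$ with $Z\sim\N(0,D)$. For fixed $r$ I will split the expectation into $|Z|\le |r|/2$ and $|Z|>|r|/2$. On the first event $|r-Z|\ge|r|/2$, and regular variation gives $p(r-z)/p(r)\to1$ uniformly for $|z|\le\epsilon|r|$. To make this uniform on the whole event I will use the explicit bound $p(u)\le C\,(1+u^2)^{-1}$ for $|u|\ge1$, together with dominated convergence in $z$; this yields a contribution $p(r)(1+o(1))$. On the second event I bound $p(r-Z)$ by $p$, which is integrable, times the Gaussian density factor; the Gaussian tail $e^{-r^2/(8D)}$ makes this piece $o(r^{-2})$. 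The log pole of $p$ causes no trouble here, because it is integrable against the bounded Gaussian density. Hence $m(r)\sim p(r)$, so $m\in\mathrm{RV}_{-2}$. I expect this convolution step, specifically handling the pole of $p$ uniformly in the second region, to be the main technical point; the bound $\int p(r-z)\phi_{\sqrt D}(z)\,\mathbf 1\{|z|>|r|/2\}\,\mathrm{d}z\le \sup_{|z|>|r|/2}\phi_{\sqrt D}(z)$ settles it.
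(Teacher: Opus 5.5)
Your proof is correct, but its first half takes a genuinely different route from the paper. The paper never proves the regular variation of $p$ itself: it cites \citet{bdpw2016}, and Section~\ref{sec:family} frames the property as inherited from a regularly varying right tail of the mixing density of $\lambda$. You instead compute directly through the substitution $s=\lambda/|v|$.

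For the horseshoe specifically, your calculation gives more. It shows $\ell$ is asymptotically constant and gives the pole with explicit bounds. Its monotone-convergence step also yields the global bound $v^2p(v)\le c$ for every $v$, which is exactly the domination you need later.

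Two small corrections there. First, the constant is $2/(\pi\sqrt{2\pi})=\sqrt{2/\pi^3}$, not $(2\pi^3)^{-1/2}$. The latter is the constant $K$ of \citet{cps2010}, whose two logarithmic envelopes both behave like $2K/v^2$, which agrees with the corrected value. Second, on $(1,1/|v|)$ keep the factor $e^{-1/(2s^2)}\in[e^{-1/2},1]$; it changes only constants.

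What the general regular-variation route buys, and your explicit integral does not, is coverage of the whole family the paper later invokes. For the horseshoe$+$, $v^2p(v)$ grows like $\log|v|$, so one needs slow variation of $\ell$ rather than a finite limit.

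For the convolution, your split at $|Z|=|r|/2$ is the same one used in the proof of Theorem~\ref{thm:robust}, but your treatment of the near region is the rigorous version. The paper asserts $p(v)=p(r)\{1+o(1)\}$ uniformly on $|v-r|\le|r|/2$. That fails, since $p(v)/p(r)$ ranges roughly over $[4/9,4]$ there. What actually gives $m(r)\sim p(r)$ is pointwise convergence for each fixed $z$ plus dominated convergence, as you propose.

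For the same reason, drop your own phrase ``uniformly for $|z|\le\epsilon|r|$''. On that set the ratio is asymptotically only confined to $[(1+\epsilon)^{-2},(1-\epsilon)^{-2}]$, and the dominated-convergence step does not need it.
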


Regular variation is the engine of \emph{tail robustness}, the property that a
prior eventually defers to a sufficiently surprising observation. The general
principle, traceable to \citet{ohagan1979} for outlier-resistant inference and
sharpened for global--local priors by \citet{bdpw2016}, is that a regularly
varying prior cannot impose unbounded shrinkage: by Definition~\ref{def:rv} the
log-density has a slowly varying derivative, so the Tweedie correction in
\eqref{eq:brown} tends to a constant order and the posterior mean tracks the
data in the tail. Theorem~\ref{thm:robust} below is the quantitative form of this
statement for the heteroscedastic Fay--Herriot model; here we stress the
\emph{conceptual} consequence.

\subsection{Why regular variation makes a good default}\label{sec:default}

\citet{bdpw2016} argue that regular variation is the property a prior must have
to ``let the data speak for themselves,'' resolving a classical objection of
\citet{efron1973} that flat, ostensibly non-informative priors can be sharply
informative for nonlinear functionals in high dimensions. A regularly varying
global--local prior is simultaneously \emph{informative about sparsity}---it
pools the many ordinary areas hard, through the pole at the origin---and
\emph{uninformative about signal}---it scarcely shrinks the few exceptional
areas, through the heavy tail. For small area estimation, where the analyst
rarely knows in advance which domains are exceptional, this combination is
exactly what one wants from a default. We collect the practical reasons, each of
which is made precise elsewhere in the paper.

\begin{enumerate}[leftmargin=1.9em,itemsep=3pt,topsep=3pt]
\item \textbf{No area must be declared special in advance.} The heavy tail
(Proposition~\ref{prop:rv}) leaves a genuinely deviant area essentially
unshrunk without the analyst flagging it, and the bounded influence of
Corollary~\ref{cor:influence} caps the damage a single corrupted direct estimate
can do. A Gaussian random effect, by contrast, shrinks every area by the same
data-independent fraction and is unboundedly sensitive to an outlier.
\item \textbf{Aggressive pooling where it is safe.} The pole at the origin
(Proposition~\ref{prop:rv}) pools the ordinary majority of areas as hard as
EBLUP does, so robustness to the exceptional areas costs almost nothing on the
quiet ones; the shrinkage profile $\kappa_i\sim\mathrm{Beta}(\half,\half)$ sorts
the two groups automatically.
\item \textbf{One scale, learned from the data, and no graph.} The single global
scale $\tau$ adapts to the unknown number of exceptional areas
(Corollary~\ref{cor:adapt}); there is no smoothing parameter to cross-validate
and no neighbourhood structure to specify, so the prior is usable when geography
is unavailable, unreliable, or beside the point.
\item \textbf{Optimality without oracle knowledge.} Despite carrying no tuning,
the posterior is minimax over sparse classes with the sharp constant
(Theorem~\ref{thm:contract}, Proposition~\ref{prop:lower}) and its credible sets
are honest---guarantees a flat or Gaussian prior cannot match.
\item \textbf{The conclusions are not special to the horseshoe.} Every property
above follows from regular variation with a local pole, a class that also
contains the horseshoe$+$ \citep{bdpw2017} and the normal--exponential--gamma and
generalized double Pareto priors; the horseshoe is the most thoroughly
understood and the easiest to compute \citep{cps2010,bdpw2019,makalic2016}, which
is why we adopt it as the representative default and discuss the family in
Section~\ref{sec:family}.
\item \textbf{It is cheap and composable.} The $O(m)$ Gibbs sampler of
Section~\ref{sec:comp} replaces only the prior on the area effects and leaves the
design-based first stage of a survey pipeline untouched, and the posterior
shrinkage coefficients $\kappa_i$ are read off directly as interpretable
diagnostics.
\end{enumerate}

\noindent The remainder of the paper turns each of these informal reasons into a
theorem, a risk calculation, or a simulation.

\section{Tail robustness and bounded influence}\label{sec:robust}

We first isolate the qualitative property that separates global--local shrinkage
from the Gaussian smoothers of Sections~\ref{sec:spatial}--\ref{sec:gl}. Fix
$\bbeta$ and work with the residual sequence $r_i = v_i + e_i$,
$e_i\sim\N(0,D_i)$, under a generic scale mixture
$v_i \sim \int \N(0,s^2)\,\pi(\mathrm{d}s)$ for the local prior $\pi$. Let
$m_i(r) = \int \phi_{D_i+s^2}(r)\,\pi(\mathrm{d}s)$ be the marginal density of
$r_i$, where $\phi_\sigma$ is the $\N(0,\sigma^2)$ density.

\begin{lemma}[Heteroscedastic Tweedie identity]\label{lem:tweedie}
For the model above, the posterior mean of the area effect is
\begin{equation}\label{eq:brown}
\E(v_i \mid r_i) = r_i + D_i\,\frac{\mathrm{d}}{\mathrm{d}r_i}\log m_i(r_i),
\end{equation}
and hence $\E(\theta_i\mid y_i) = y_i + D_i\,(\log m_i)'(y_i - \bx_i^\top\bbeta)$.
\end{lemma}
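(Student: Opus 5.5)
The plan is the classical Tweedie/Brown argument, carried out conditionally on the local scale so that the heteroscedastic variance $D_i$ appears explicitly. Because $D_i$ is the same for every mixture component, it factors out of the integral.

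Fix $i$ and write $D=D_i$, $r=r_i$. Start from the joint density of $(v,r)$, namely $\phi_{\sqrt D}(r-v)\,p(v)$, where $p(v)=\int\phi_s(v)\,\pi(\mathrm{d}s)$ is the prior density of $v$. The marginal is
\[
m(r)=\int \phi_{\sqrt D}(r-v)\,p(v)\,\mathrm{d}v=\int \phi_{\sqrt{D+s^2}}(r)\,\pi(\mathrm{d}s),
\]
by Fubini and the Gaussian convolution identity. This matches the paper's $m_i$, read with $\phi_{D_i+s^2}$ as the density of variance $D_i+s^2$.

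The key observation is that the Gaussian kernel satisfies
\[
\frac{\partial}{\partial r}\phi_{\sqrt D}(r-v)=-\frac{r-v}{D}\,\phi_{\sqrt D}(r-v).
\]
Differentiating $m$ under the integral sign then gives
\[
m'(r)=-\frac{1}{D}\int (r-v)\,\phi_{\sqrt D}(r-v)\,p(v)\,\mathrm{d}v
=-\frac{m(r)}{D}\bigl(r-\E(v\mid r)\bigr).
\]
Dividing by $m(r)>0$ and rearranging yields $\E(v\mid r)=r+D\,(\log m)'(r)$, which is \eqref{eq:brown}.

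For the second claim, use $\theta_i=\bx_i^\top\bbeta+v_i$ with $\bbeta$ held fixed, so that $\E(\theta_i\mid y_i)=\bx_i^\top\bbeta+\E(v_i\mid r_i)$. Since $r_i=y_i-\bx_i^\top\bbeta$, this equals $y_i+D_i(\log m_i)'(y_i-\bx_i^\top\bbeta)$.

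The only step needing care is justifying differentiation under the integral sign when $\pi$ is an arbitrary mixing measure, possibly with mass near $s=0$ or heavy tails, as for the half-Cauchy. I would handle it with a dominated-convergence bound in the $s$-representation. For $r$ in a compact interval $|r|\le R$,
\[
\bigl|\partial_r\phi_{\sqrt{D+s^2}}(r)\bigr|=\frac{|r|}{D+s^2}\,\phi_{\sqrt{D+s^2}}(r)\le \frac{R}{D}\,(2\pi D)^{-1/2},
\]
which is uniformly bounded in $s$ because $D>0$. It is therefore $\pi$-integrable, as $\pi$ is a probability measure.

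The same bound gives $\E|v|\,\phi_{\sqrt D}(r-v)<\infty$ for each fixed $r$, so the posterior mean exists and the manipulations above are legitimate. To see this, note that conditionally on $s$ the variable $v$ is Gaussian, and
\[
\int |v|\,\phi_{\sqrt D}(r-v)\,\phi_s(v)\,\mathrm{d}v\le C_{r,D},
\]
with $C_{r,D}$ not depending on $s$. Positivity of the noise variance $D_i$ is exactly what makes these bounds hold regardless of how heavy-tailed the local prior is. Beyond this bound, the argument is routine.
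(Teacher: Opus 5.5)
Your proof is correct and takes essentially the same route as the paper: you differentiate the marginal $m_i(r)=\int\phi_{\sqrt{D_i}}(r-v)\,p(v)\,\mathrm{d}v$ under the integral using the Gaussian kernel derivative $-(r-v)D_i^{-1}\phi_{\sqrt{D_i}}(r-v)$, recognise the result as $m_i(r)\{\E(v_i\mid r)-r\}/D_i$, rearrange, and hold $\bbeta$ fixed for the second statement. Your dominated-convergence bounds, uniform in $s$ because $D_i>0$, justify the interchange of differentiation and integration and the existence of the posterior mean; the paper's proof leaves both implicit, so these are a useful rigorous supplement.
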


\begin{proof}
Since $r_i\mid v_i \sim \N(v_i, D_i)$, the joint density factorizes as
$\phi_{D_i}(r_i-v_i)\,p(v_i)$. Differentiating $m_i(r)=\int
\phi_{D_i}(r-v)p(v)\,\mathrm{d}v$ and using $\phi_{D_i}'(r-v) =
D_i^{-1}(v-r)\phi_{D_i}(r-v)$ gives $m_i'(r) = D_i^{-1}\int (v-r)
\phi_{D_i}(r-v)p(v)\,\mathrm{d}v = D_i^{-1}\big(\E(v\mid r)-r\big) m_i(r)$.
Rearranging yields \eqref{eq:brown}; the second statement follows from
$\theta_i = \bx_i^\top\bbeta + v_i$.
\end{proof}

Equation \eqref{eq:brown} is the heteroscedastic form of Brown's identity. The
estimator's behaviour is governed by the tail of the marginal score
$(\log m_i)'$. For a Gaussian effect $v_i\sim\N(0,A)$ one has
$m_i = \phi_{A+D_i}$ and $(\log m_i)'(r) = -r/(A+D_i)$, so
$\E(\theta_i\mid y_i) = \bx_i^\top\bbeta + \{A/(A+D_i)\}\,r_i$ recovers
\eqref{eq:eblup}: the score is \emph{linear}, the influence
$\mathrm{d}\,\E(\theta_i\mid y_i)/\mathrm{d} y_i = \gamma_i$ is a constant
strictly below one, and the gap $y_i-\E(\theta_i\mid y_i)$ grows without bound as
the direct estimate moves into the tail. The same holds, coordinatewise, for the
BYM smoother. The horseshoe behaves oppositely.

\begin{theorem}[Tail robustness of the horseshoe Fay--Herriot estimator]
\label{thm:robust}
Let $v_i$ have the horseshoe prior \eqref{eq:hs} with $\tau$ fixed and $D_i>0$
known. Then the marginal $m_i$ has polynomially heavy tails, with
$m_i(r) \asymp r^{-2}$ as $|r|\to\infty$, and consequently
\begin{equation}\label{eq:redescend}
(\log m_i)'(r) = -\frac{2}{r} + o(r^{-1}), \qquad
\E(\theta_i\mid y_i) = y_i - \frac{2D_i}{\,y_i-\bx_i^\top\bbeta\,} + o\!\big(|y_i|^{-1}\big)
\;\to\; y_i .
\end{equation}
Hence the influence $\mathrm{d}\,\E(\theta_i\mid y_i)/\mathrm{d} y_i \to 1$ and the
shrinkage gap $y_i - \E(\theta_i\mid y_i) \to 0$: areas with strong signal are
returned essentially unshrunk, and the sensitivity of the estimate to its own
direct value is bounded over all $y_i$.
\end{theorem}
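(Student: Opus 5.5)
Combining Lemma~\ref{lem:tweedie} with the tail of the marginal reduces everything to one statement about $m_i$: its log-derivative satisfies $(\log m_i)'(r)=-2/r+o(r^{-1})$. Once this is known, \eqref{eq:brown} gives $\E(\theta_i\mid y_i)=y_i-2D_i/r_i+o(|r_i|^{-1})$ with $r_i=y_i-\bx_i^\top\bbeta$. Since $\bbeta$ is fixed, $|r_i|\asymp|y_i|$, which yields \eqref{eq:redescend} and the vanishing shrinkage gap. The plan therefore has three steps: (i) establish $m_i\asymp r^{-2}$; (ii) upgrade it to the derivative asymptotic; (iii) deduce the influence and boundedness claims.

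For (i), write $m_i(r)=\int_0^\infty \phi_{D_i+\tau^2\lambda^2}(r)\,p_C(\lambda)\,\mathrm d\lambda$, where $p_C(\lambda)=2/\{\pi(1+\lambda^2)\}$. Substitute $\lambda=|r|u/\tau$. The integrand becomes $(|r|/\tau)\,(2\pi)^{-1/2}(D_i+r^2u^2)^{-1/2}\exp\{-r^2/(2(D_i+r^2u^2))\}\,p_C(|r|u/\tau)$. As $|r|\to\infty$ this is asymptotic to $|r|^{-2}\cdot\frac{2\tau}{\pi\sqrt{2\pi}}\,u^{-3}e^{-1/(2u^2)}$. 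This limit function is integrable on $(0,\infty)$: it is killed at $0$ by the exponential and decays like $u^{-3}$ at infinity. Dominated convergence then gives $r^2m_i(r)\to c_\tau:=\frac{2\tau}{\pi\sqrt{2\pi}}\int_0^\infty u^{-3}e^{-1/(2u^2)}\,\mathrm du\in(0,\infty)$. For the dominating function, split at $u=1$ and use $D_i+r^2u^2\ge r^2u^2$ in the prefactor. For small $u$ use $\exp\{-r^2/(2(D_i+r^2u^2))\}\le \exp\{-1/(2(D_i+u^2))\}$ once $|r|\ge1$. Alternatively, one can simply invoke Proposition~\ref{prop:rv}, which already gives $m_i\in\mathrm{RV}_{-2}$; but here we get an explicit constant with trivial slowly varying part, which is what (ii) needs.

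Step (ii) is the main obstacle, because regular variation of $m_i$ alone does not control $m_i'$. I would attack it directly by the same scaling. Differentiating under the integral, $m_i'(r)=-r\int(D_i+s^2)^{-1}\phi_{D_i+s^2}(r)\,\pi(\mathrm ds)$. After the substitution this equals $-r\cdot r^{-2}\cdot |r|^{-2}\cdot\frac{2\tau}{\pi\sqrt{2\pi}}\int u^{-5}e^{-1/(2u^2)}\,\mathrm du\,(1+o(1))$, with the same dominated-convergence justification; the extra factor $u^{-2}$ is harmless at infinity. Both integrals are Gamma integrals: with $w=1/(2u^2)$, $\int u^{-3}e^{-1/(2u^2)}\,\mathrm du=1$ and $\int u^{-5}e^{-1/(2u^2)}\,\mathrm du=2$. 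Hence $m_i'(r)/m_i(r)\to$ the stated form $-2/r$, i.e.\ $r(\log m_i)'(r)\to-2$, which is $(\log m_i)'=-2/r+o(r^{-1})$.

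A fallback for (ii) would be to note that $m_i$ is eventually monotone in $|r|$, since $m_i'$ has the sign of $-r$, and apply the monotone density theorem of \citet{bgt1987} to $\int_r^\infty m_i$. The explicit computation is cleaner, so I would use it.

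For (iii), the influence is $\frac{\mathrm d}{\mathrm dy_i}\E(\theta_i\mid y_i)=1+D_i(\log m_i)''(r_i)$. Extending the same scaling argument to second order gives $(\log m_i)''(r)=2/r^2+o(r^{-2})\to0$, so the influence tends to $1$. Bounded sensitivity over all $y_i$ then follows from a compactness argument. The map $r\mapsto D_i(\log m_i)'(r)$ is continuous, because $m_i$ is smooth and strictly positive. It is bounded on compacts, and it is $O(1/|r|)$ outside them. Hence $\sup_{y_i}|y_i-\E(\theta_i\mid y_i)|<\infty$, and the gap tends to $0$ in the tails.
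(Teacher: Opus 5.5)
Your proof is correct and takes a genuinely different route from the paper's.

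\textbf{The paper's route.} It invokes Proposition~\ref{prop:rv} to get $p\in\mathrm{RV}_{-2}$ and argues that convolution with $\phi_{D_i}$ preserves this. It then passes from $m_i\in\mathrm{RV}_{-2}$ to $r(\log m_i)'(r)\to-2$ by appeal to the Karamata representation. The second-order claim $(\log m_i)''(r)=2r^{-2}+o(r^{-2})$ is asserted without argument.

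\textbf{Your route.} You bypass regular variation. You write $m_i$, $m_i'$, $m_i''$ as Gaussian scale mixtures, rescale $\lambda=|r|u/\tau$, and evaluate each limit by dominated convergence. The constants come from $\int_0^\infty u^{-2k-1}e^{-1/(2u^2)}\,\mathrm{d}u=2^{k-1}(k-1)!$, i.e.\ $1,2,8$ for $k=1,2,3$. The third value gives $r^4m_i''(r)\to 6c_\tau$, hence $(\log m_i)''=6r^{-2}-4r^{-2}+o(r^{-2})$, which confirms your step (iii).

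\textbf{What each buys.} The paper's argument is shorter and formally covers any regularly varying local prior, which Section~\ref{sec:family} relies on. But, as you rightly flag, regular variation alone does not control the derivative. For example, $f(x)=x^{-2}\{1+x^{-1}\sin(x^3)\}$ lies in $\mathrm{RV}_{-2}$, yet $x(\log f)'(x)$ oscillates unboundedly. So the Karamata step needs a smooth-variation or monotonicity input that the paper does not supply. Your argument uses the explicit Cauchy mixing density, but it is self-contained. It yields the sharper $r^2m_i(r)\to c_\tau$, so $m_i\asymp r^{-2}$ holds literally rather than only up to a slowly varying factor. It also actually proves both derivative asymptotics, and with Potter bounds the same scaling would extend to regularly varying mixing densities.

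\textbf{Two small corrections.}

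First, the dominating function you describe fails near $u=0$. The factor $\exp\{-1/(2(D_i+u^2))\}\ge e^{-1/(2D_i)}$ does not vanish there, so it cannot absorb the $u^{-3}$ (or $u^{-5}$, $u^{-7}$) produced by $D_i+r^2u^2\ge r^2u^2$. Split instead by $|r|u\le1$ versus $|r|u>1$, with $|r|\ge1$.
\begin{itemize}
\item In the first region, bound the Gaussian factor by $(2\pi D_i)^{-1/2}e^{-r^2/(2(D_i+1))}$ and the Cauchy density by $2/\pi$. The rescaled integrand is then $O(|r|^{k}e^{-r^2/(2(D_i+1))})$ with $k\in\{3,5,7\}$, which is uniformly bounded.
\item In the second region, $D_i/r^2<D_iu^2$ gives the factor $\exp\{-1/(2(D_i+1)u^2)\}$, which does kill every negative power of $u$.
\end{itemize}
The integrand is then dominated by a function that is bounded on $(0,1)$ and is $O(u^{-3})$, $O(u^{-5})$, $O(u^{-7})$ respectively on $[1,\infty)$.

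Second, your fallback is misaimed. Applying the monotone density theorem to $\int_r^\infty m_i$ recovers $m_i$ from its tail integral, not $m_i'$. You would have to apply it to $m_i(r)=\int_r^\infty\{-m_i'(t)\}\,\mathrm{d}t$, which requires $m_i'$ to be eventually monotone. That is precisely what $m_i''\sim 6c_\tau r^{-4}>0$ from your second-order computation provides.
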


\begin{proof}
Write $v=\tau\lambda z$ with $z\sim\N(0,1)$ and $\lambda\sim\mathcal{C}^+(0,1)$,
so $v\mid\lambda\sim\N(0,\tau^2\lambda^2)$ and
$p(v)=\int_0^\infty \phi_{\tau^2\lambda^2}(v)\,\tfrac{2}{\pi(1+\lambda^2)}\,
\mathrm{d}\lambda$. By Proposition~\ref{prop:rv} the prior $p$ is regularly
varying with index $-2$: $p(v)=|v|^{-2}\ell(|v|)$ for a slowly
varying $\ell$ \citep{bdpw2016,bgt1987}. Convolution with the fixed light-tailed kernel
$\phi_{D_i}$ preserves both the index and the slowly varying factor, because the
contribution of the region $|v-r|>|r|/2$ to $m_i(r)=\int\phi_{D_i}(r-v)p(v)\,
\mathrm{d}v$ is $O(e^{-r^2/8D_i})$, negligible against any polynomial, while on
$|v-r|\le|r|/2$ one has $p(v)=p(r)\{1+o(1)\}$ by slow variation and
$\int\phi_{D_i}(r-v)\,\mathrm{d}v\to1$. Hence $m_i(r)=r^{-2}\ell(r)\{1+o(1)\}$ is
regularly varying with index $-2$, the analytic content of ``the posterior mass
sits at $v\approx r$.'' By the Karamata representation a regularly varying
function of index $\rho$ satisfies $r\,(\log m_i)'(r)\to\rho$, here $\rho=-2$, so
$(\log m_i)'(r)=-2/r+o(r^{-1})$. Substituting into \eqref{eq:brown} gives
\eqref{eq:redescend}. Finally $(\log m_i)'$ is continuous and tends to $0$ at
$\pm\infty$, hence bounded on $\mathbb{R}$; and the influence
$\mathrm{d}\E(\theta_i\mid y_i)/\mathrm{d}y_i=1+D_i(\log m_i)''(r)\to1$ since
$(\log m_i)''(r)=2r^{-2}+o(r^{-2})$.
\end{proof}

The contrast is the whole point of choosing the prior. A single county whose
direct mortality or poverty rate is genuinely extreme is, under EBLUP or BYM2,
dragged towards the fitted surface by a fixed fraction and towards its
neighbours by the graph; under the horseshoe it is left where the data put it,
while the quiet majority of areas, whose $r_i$ are of order $\sqrt{D_i}$, are
shrunk hard. This is the redescending, bounded-influence behaviour familiar from
robust estimation, obtained here purely through the prior.

The qualitative statement can be sharpened into a uniform sensitivity bound,
which is what ``bounded influence'' means operationally for an area estimate.

\begin{corollary}[Bounded sensitivity]\label{cor:influence}
Under the hypotheses of Theorem~\ref{thm:robust} the posterior-mean map
$y_i\mapsto\E(\theta_i\mid y_i)$ is continuously differentiable with
\begin{equation}\label{eq:sens}
\sup_{y_i\in\R}\ \Big|\,\frac{\mathrm{d}}{\mathrm{d}y_i}\E(\theta_i\mid y_i)\,\Big|
\ =\ 1+D_i\sup_{r\in\R}(\log m_i)''(r)\ <\ \infty,
\end{equation}
and the gross-error sensitivity
$\gamma^\ast_i=\sup_{r}|D_i(\log m_i)'(r)|$, the largest absolute change the prior
ever imposes on the direct estimate, is finite. By contrast, for the Gaussian
estimator \eqref{eq:eblup} the analogous quantity
$\sup_r|r-\E(v_i\mid r)|=\sup_r(D_i/(A+D_i))|r|=\infty$.
\end{corollary}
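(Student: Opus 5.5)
The approach is to work directly with the scale-mixture form $m_i(r)=\int\phi_{D_i+s^2}(r)\,\pi(\mathrm{d}s)$, where $s=\tau\lambda$ and $\pi$ is the half-Cauchy law scaled by $\tau$. Here $\phi_{D_i+s^2}$ denotes the $\N(0,D_i+s^2)$ density. I would not try to differentiate the asymptotic relation \eqref{eq:redescend} a second time. That step is not automatic, and the proof of Theorem~\ref{thm:robust} only asserts it. Instead, both the smoothness and the uniform bounds will follow from posterior-moment representations of $(\log m_i)'$ and $(\log m_i)''$.

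\textbf{Step 1: smoothness.} For every $k$, the $k$-th $r$-derivative of $\phi_{D_i+s^2}(r)$ is bounded uniformly in $s$ by a constant depending only on $D_i$ and $k$, because the variance is at least $D_i$. Dominated convergence therefore lets me differentiate under $\pi(\mathrm{d}s)$, so $m_i\in C^\infty$ and $m_i>0$. Hence $y_i\mapsto\E(\theta_i\mid y_i)=y_i+D_i(\log m_i)'(y_i-\bx_i^\top\bbeta)$ is $C^1$ by Lemma~\ref{lem:tweedie}.

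\textbf{Step 2: moment representations.} Let $W=(D_i+s^2)^{-1}\in(0,1/D_i]$, and write $\E_r$ and $\Var_r$ for expectations under the posterior of $s$ given $r$. That posterior has density proportional to $\phi_{D_i+s^2}(r)\,\pi(\mathrm{d}s)$. Differentiating the mixture gives
\[
(\log m_i)'(r)=-r\,\E_r[W],\qquad (\log m_i)''(r)=-\E_r[W]+r^2\Var_r(W).
\]
The gross-error sensitivity is then immediate. The map $r\mapsto D_i(\log m_i)'(r)$ is continuous and tends to $0$ at $\pm\infty$ by \eqref{eq:redescend}, so it is bounded and $\gamma_i^\ast<\infty$.

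\textbf{Step 3: the derivative bound.} I expect this to be the main obstacle, because $r^2\Var_r(W)$ is an unbounded multiplier times a small quantity. The trick is that $0<W\le 1/D_i$ gives $W^2\le W/D_i$. Therefore
\[
r^2\Var_r(W)\le r^2\E_r[W^2]\le \frac{r^2\E_r[W]}{D_i}=\frac{|r\,(\log m_i)'(r)|}{D_i}.
\]
The right-hand side is continuous and, by Theorem~\ref{thm:robust}, tends to $2/D_i$ as $|r|\to\infty$, so it is bounded on $\R$. Combining this with $0<\E_r[W]\le 1/D_i$ yields
\[
-\frac{1}{D_i}\le(\log m_i)''(r)\le C<\infty
\]
uniformly in $r$. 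Consequently the derivative $1+D_i(\log m_i)''$ lies in $[0,\,1+D_iC]$. Equivalently, this is the second Tweedie identity $\mathrm{d}\E/\mathrm{d}y=\Var(v_i\mid r)/D_i\ge0$. So the absolute value in \eqref{eq:sens} is attained by the supremum of $1+D_i(\log m_i)''$, which is finite.

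\textbf{Step 4: the Gaussian contrast.} For $v_i\sim\N(0,A)$, Lemma~\ref{lem:tweedie} gives $r-\E(v_i\mid r)=\{D_i/(A+D_i)\}\,r$. Its supremum over $r$ is infinite because $D_i/(A+D_i)>0$.
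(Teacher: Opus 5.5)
Your proof is correct. It takes a more careful route than the paper at the one step that carries weight, namely bounding the derivative.

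The paper's proof reads off from Theorem~\ref{thm:robust} that $(\log m_i)'$ and $(\log m_i)''$ both vanish at $\pm\infty$, then invokes continuity. This has two weaknesses. First, it rests on the second-order expansion $(\log m_i)''(r)=2r^{-2}+o(r^{-2})$, which the theorem's proof asserts without derivation. Second, it never says why the supremum of the \emph{absolute} derivative equals $1+D_i\sup_r(\log m_i)''(r)$.

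Your representation $(\log m_i)''=-\E_r[W]+r^2\Var_r(W)$ with $0<W\le 1/D_i$ fixes both points. The bound $r^2\Var_r(W)\le |r(\log m_i)'(r)|/D_i$ needs only the first-order tail statement $r(\log m_i)'(r)\to-2$. The lower bound $(\log m_i)''\ge -1/D_i$, i.e.\ $\mathrm{d}\E(\theta_i\mid y_i)/\mathrm{d}y_i=\Var(v_i\mid r_i)/D_i\ge0$, is exactly what licenses dropping the absolute value in \eqref{eq:sens}. Even the first-order input can be checked directly from your representation: substituting $s=ru$ and using dominated convergence gives $r^2\E_r[W]\to2$. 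So your argument is self-contained in a way the paper's is not.

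What you give up is the limit $\mathrm{d}\E(\theta_i\mid y_i)/\mathrm{d}y_i\to1$ asserted in the theorem. The corollary asks only for boundedness, so nothing needed is lost. The remaining parts match the paper in substance: smoothness (you use dominated convergence where the paper appeals to real-analyticity), the finiteness of $\gamma_i^\ast$, and the Gaussian contrast.
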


\begin{proof}
The marginal $m_i$ is a Gaussian scale-mixture, hence real-analytic and
strictly positive, so $\log m_i$ is $C^\infty$ and the derivative in
\eqref{eq:sens} exists everywhere. By Theorem~\ref{thm:robust},
$(\log m_i)'(r)\to0$ and $(\log m_i)''(r)\to0$ as $|r|\to\infty$; both are
continuous, so each attains a finite supremum on $\R$, giving \eqref{eq:sens}
and the finiteness of $\gamma^\ast_i$. The Gaussian display is immediate from
$\E(v_i\mid r)=\{A/(A+D_i)\}r$.
\end{proof}

Corollary~\ref{cor:influence} is the precise sense in which the horseshoe is a
\emph{bounded-influence} procedure and the linear smoothers are not: an
adversarial corruption of one area's direct estimate can move the EBLUP/BYM2
estimate of that area arbitrarily far, but moves the horseshoe estimate by at
most $\gamma^\ast_i$, a constant depending only on $D_i$ and $\tau$. In small
area work, where a single misreported domain total is a routine occurrence, this
is a practical and not merely an asymptotic distinction.

\section{Heteroscedastic posterior contraction and minimaxity}
\label{sec:minimax}

We now quantify estimation accuracy. The Fay--Herriot problem differs from the
sequence model on which horseshoe theory was built \citep{vdp2014,vdp2017}
only in that the noise variances $D_i$ are known and unequal. This is exactly
the structure that lets us \emph{standardize}. Put
\begin{equation}\label{eq:standardize}
s_i = \frac{r_i}{\sqrt{D_i}} = w_i + \varepsilon_i, \qquad
w_i = \frac{v_i}{\sqrt{D_i}}, \quad \varepsilon_i\sim \N(0,1)\ \text{i.i.d.}
\end{equation}
The standardized signal $\bm w$ has the same support---hence the same
sparsity---as $\bm v$, and recovering $\bm\theta$ in the $D$-weighted loss
$\sum_i (\hat\theta_i-\theta_i)^2/D_i$ is equivalent to recovering $\bm w$ in
ordinary squared error in the unit-variance model \eqref{eq:standardize}. The
prior on $w_i$ is again horseshoe, with local scale $\lambda_i$ unchanged and
global scale $\tau/\sqrt{D_i}$ within a bounded factor of $\tau$ under
Assumption~\ref{ass:het} below, so the sequence-model theory applies with the
same calibration. We impose mild regularity.

\begin{assumption}[Bounded heteroscedasticity]\label{ass:het}
There exist constants $0<\underline{D}\le \overline{D}<\infty$, not depending on
$m$, with $\underline{D}\le D_i\le \overline{D}$ for all $i$.
\end{assumption}

\begin{assumption}[Nearly black truth]\label{ass:sparse}
The true effect vector $\bm v_0$ lies in
$\ell_0[q_m]=\{\bm v: \#\{i: v_i\ne 0\}\le q_m\}$ with $q_m\to\infty$ and
$q_m = o(m)$.
\end{assumption}

Assumption~\ref{ass:het} is automatic when design variances are bounded away
from zero and infinity after smoothing, the usual situation; it guarantees that
standardization neither creates nor destroys signal. Under
Assumption~\ref{ass:sparse} the minimax risk of the standardized problem over
$\ell_0[q_m]$ is the classical needles-in-a-haystack rate $2q_m\log(m/q_m)
(1+o(1))$ \citep{djhs1992}.

\begin{theorem}[Minimax contraction for the horseshoe Fay--Herriot posterior]
\label{thm:contract}
Suppose Assumptions~\ref{ass:het}--\ref{ass:sparse} hold and the horseshoe prior
\eqref{eq:hs} is used with a global scale $\tau=\tau_m$ satisfying
$\tau_m \asymp (q_m/m)\sqrt{\log(m/q_m)}$, or with $\tau$ given an independent
prior with sufficient mass near this value or estimated by marginal maximum
likelihood. Then for the standardized model \eqref{eq:standardize} the full
posterior $\Pi(\cdot\mid \bm s)$ contracts about $\bm w_0$ at the minimax rate:
for a sufficiently large constant $M$,
\begin{equation}\label{eq:contract}
\sup_{\bm w_0\in\ell_0[q_m]}
\E_{\bm w_0}\, \Pi\!\Big(\ \|\bm w-\bm w_0\|^2 > M\, q_m\log(m/q_m)\ \Big|\ \bm s\Big)
\longrightarrow 0 .
\end{equation}
Consequently, in the original parameterization, the posterior mean
$\hat\theta_i$ (uniformly integrable on the contracting ball) has error
$\Delta_i=\hat\theta_i-\theta_{0i}$ obeying
\begin{equation}\label{eq:weighted}
\sup_{\bm v_0\in\ell_0[q_m]}\ \E_{\bm v_0}\sum_{i=1}^m \frac{\Delta_i^2}{D_i}
\ \lesssim\ q_m\,\log(m/q_m),
\qquad
\sup_{\bm v_0\in\ell_0[q_m]}\ \E_{\bm v_0}\sum_{i=1}^m \Delta_i^2
\ \lesssim\ \overline{D}\,q_m\,\log(m/q_m).
\end{equation}
\end{theorem}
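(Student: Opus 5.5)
The plan is to reduce everything to the homoscedastic sequence model and invoke the contraction theory of \citet{vdp2014,vdp2017}. The one genuinely new point is that the standardized problem has coordinatewise global scales $\tau_i=\tau/\sqrt{D_i}$ rather than a single $\tau$.

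First, with $\bbeta$ fixed, I would write the standardized model \eqref{eq:standardize}. Conditionally on $\tau$, the coordinates are independent, with $s_i\mid w_i\sim\N(w_i,1)$ and $w_i\mid\lambda_i\sim\N(0,\tau_i^2\lambda_i^2)$. Under Assumption~\ref{ass:het}, $\tau_i\in[\tau/\sqrt{\overline D},\tau/\sqrt{\underline D}]$, so every $\tau_i$ is within a fixed constant factor of $\tau_m\asymp (q_m/m)\sqrt{\log(m/q_m)}$. Since $\ell_0[q_m]$ is invariant under $v\mapsto w$, the supremum over $\bm v_0$ equals the supremum over $\bm w_0$.

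Second, I would recheck the van der Pas--Kleijn--van der Vaart argument. Their proof never uses equality of the scales across coordinates. It rests on coordinatewise bounds for the posterior mean and variance, split into two groups of coordinates.

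\emph{Zero coordinates} ($w_{0i}=0$). Here $\E(w_i^2\mid s_i)\lesssim \tau_i\,\zeta_{\tau_i}^{-1}e^{s_i^2/2}\wedge(1+s_i^2)$ with $\zeta_\tau=\sqrt{2\log(1/\tau)}$. Summed over $m-q_m$ Gaussian $s_i$, this gives $O\big(m\tau_m\sqrt{\log(1/\tau_m)}\big)=O(q_m\log(m/q_m))$. The quantities $\tau_i$ and $\log(1/\tau_i)$ change only by bounded factors and additive $O(1)$, because $\log(1/\tau_i)=\log(1/\tau)+O(1)$.

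\emph{Nonzero coordinates}. Here $|\E(w_i\mid s_i)-s_i|\lesssim \zeta_{\tau_i}$ and $\Var(w_i\mid s_i)\lesssim 1+\zeta_{\tau_i}^2$ uniformly. This gives $q_m(1+2\log(1/\tau_m))\asymp q_m\log(m/q_m)$. One can also cite Theorem~\ref{thm:robust} for the tail behaviour, which shows the bias is $O(1)$ once $|s_i|\gg\zeta_\tau$.

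Combining the two groups bounds the posterior second moment $\E_{\bm w_0}\int\|\bm w-\bm w_0\|^2\,\mathrm d\Pi(\bm w\mid\bm s)\lesssim q_m\log(m/q_m)$. Markov's inequality then yields \eqref{eq:contract} for a large $M$.

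Third, I would handle $\tau$ that is not fixed. For marginal maximum likelihood and for a hyperprior, I would follow \citet{vdp2017}. The key fact is that $\hat\tau$, or the posterior of $\tau$, lies with high probability in $[1/m,\,C(q_m/m)\sqrt{\log(m/q_m)}]$. On that range the zero-coordinate sum is still controlled, and the nonzero-coordinate bias is at most $\zeta_{1/m}\asymp\sqrt{\log m}$. This is the usual adaptive argument, which needs $q_m\gtrsim\log m$ or a lower truncation of $\tau$ at $1/m$.

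I expect the main obstacle to be verifying that their likelihood-based localization of $\tau$ survives heterogeneous $\tau_i$. The marginal likelihood in $\tau$ is now a sum of terms with rescaled arguments. My first attempt would be monotonicity: sandwich the score $\sum_i \partial_\tau\log m(s_i;\tau/\sqrt{D_i})$ between the homoscedastic scores evaluated at $\tau/\sqrt{\underline D}$ and $\tau/\sqrt{\overline D}$. That sandwich transfers their upper and lower deviation bounds at the cost of constants.

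Finally, \eqref{eq:weighted} follows directly. By Jensen, $\|\E(\bm w\mid\bm s)-\bm w_0\|^2\le\int\|\bm w-\bm w_0\|^2\,\mathrm d\Pi$, and the expectation of the right side was bounded above. Since $\Delta_i^2/D_i=(\hat w_i-w_{0i})^2$, this gives the first display. The second follows from $\Delta_i^2\le\overline D\,\Delta_i^2/D_i$.
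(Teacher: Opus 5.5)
Your route is the paper's: standardize by $\sqrt{D_i}$, import the sequence-model theory of van der Pas and coauthors, and multiply back. The paper's proof is only that citation, resting on the assertion that $\tau/\sqrt{D_i}$ is within a constant factor of $\tau$. You go further and verify that the three ingredients move only by constants under heterogeneous scales: the null-coordinate second moment ($\asymp\tau_i\zeta_{\tau_i}$ after integration), the bias bound $\zeta_{\tau_i}$, and the variance bound $1+\zeta_{\tau_i}^2$. Your Jensen step is also cleaner than the paper's appeal to uniform integrability.

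Your pointwise null bound needs correcting: it should read $\tau_i e^{s_i^2/2}\wedge(1+s_i^2)$. At $|s_i|=\zeta_{\tau_i}$ the posterior second moment is of order one, not $\zeta_{\tau_i}^{-1}$. The integrated value $\tau_i\zeta_{\tau_i}$ that you actually use is still right.

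The genuine gap in the fixed-$\tau$ part is the final step. Markov's inequality gives only $\sup_{\bm w_0}\E_{\bm w_0}\Pi(\|\bm w-\bm w_0\|^2>Mq_m\log(m/q_m)\mid\bm s)\le C/M$, which does not tend to zero for a fixed $M$. Markov alone therefore proves the statement only with $M=M_m\to\infty$. To get a fixed $M$, use that for fixed $\tau$ the posterior is a product. Then the posterior mean of $\|\bm w-\bm w_0\|^2$ and its posterior fluctuation are sums of independent coordinate terms. Their variances are bounded by $\sum_i\E_{\bm w_0}\E[(w_i-w_{0i})^4\mid s_i]\lesssim m\tau_m\zeta_{\tau_m}^3+q_m\zeta_{\tau_m}^4\asymp q_m\log^2(m/q_m)$. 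This is $o\big((q_m\log(m/q_m))^2\big)$ precisely because $q_m\to\infty$. Chebyshev then gives convergence to zero for every $M$ exceeding twice the constant in your second-moment bound.

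The adaptive clause is where your argument falls short of the statement. By your own accounting, once $\tau$ is only localized in $[1/m,\,C(q_m/m)\sqrt{\log(m/q_m)}]$, the non-null coordinates cost $q_m\zeta_{1/m}^2\asymp q_m\log m$. So what you prove is the rate $q_m\log m$, the near-minimax rate of the adaptive horseshoe theory. It matches $q_m\log(m/q_m)$ only when $q_m\le m^{1-\epsilon}$; for $q_m=m/\log m$ it is too large by a factor of order $\log m/\log\log m$. Reaching $q_m\log(m/q_m)$ for every $q_m=o(m)$ needs an ingredient you have not supplied. One option is a lower bound on $\hat\tau$ (or on the posterior of $\tau$) of order $q_m/m$ up to logarithms. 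Another is a finer argument showing that $\tau$ can be small only when few signals sit in the range where the bias $\zeta_\tau$ is actually paid. Failing both, the polynomial-sparsity restriction must be added. The paper's bare citation does not resolve this either.

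Two smaller points in the adaptive part:
\begin{itemize}
\item Your Jensen step with a data-dependent $\tau$ needs the expected posterior second moment controlled off the localization event, which high-probability localization alone does not give.
\item The monotonicity sandwich you propose for the score is not available in general, since $t\mapsto t\partial_t\log m(s;t)$ rises and then falls for moderately large $|s|$. It is also unnecessary. Each term $\tau\partial_\tau\log m(s_i;\tau/\sqrt{D_i})$ is the homoscedastic quantity $t\partial_t\log m(s_i;t)$ evaluated at $t=\tau/\sqrt{D_i}$. So the homoscedastic termwise bounds apply directly at a scale within a bounded factor of $\tau$.
\end{itemize}
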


\begin{proof}
Standardization \eqref{eq:standardize} maps the Fay--Herriot model to the
unit-variance sequence model with signal $\bm w_0$, whose sparsity equals that
of $\bm v_0$ because $\sqrt{D_i}>0$. The horseshoe contraction theorem of
\citet{vdp2014} and its adaptive refinement \citep{vdp2017} apply to
$\bm s\mid\bm w_0$, giving \eqref{eq:contract} under the stated calibration of
$\tau_m$; adaptivity to the unknown $q_m$ is what licenses the empirical-Bayes
or hierarchical choice of $\tau$. Multiplying coordinate $i$ by $\sqrt{D_i}$
transports \eqref{eq:contract} to the weighted bound in \eqref{eq:weighted}, and
$\Delta_i^2 \le \overline{D}\,\Delta_i^2/D_i$ under Assumption~\ref{ass:het}
gives the unweighted bound.
\end{proof}

\begin{remark}[Honest credible sets]
Under the excessive-bias-restriction / self-similarity conditions of
\citet{vdp2017}, the horseshoe credible balls for the standardized signal have
asymptotic frequentist coverage at least the nominal level and radius of minimax
order; standardization carries this to the $D$-weighted credible sets for
$\bm\theta$. This is the theoretical counterpart of the strong empirical coverage
of the horseshoe under sparsity reported in Section~\ref{sec:sim}.
\end{remark}

\begin{remark}[Why the known $D_i$ matter]
The reduction is available only because the $D_i$ are known. With unknown
heteroscedastic variances no exact standardization exists and the minimax rate
itself changes. The Fay--Herriot premise---design variances supplied by the
survey---is therefore not a technical convenience but the structural fact that
makes the heavy-tailed prior provably minimax in this problem.
\end{remark}

\subsection{Calibrating the global scale}\label{sec:calib}

Theorem~\ref{thm:contract} requires the global scale to track the unknown
sparsity through $\tau_m\asymp (q_m/m)\sqrt{\log(m/q_m)}$, which cannot be set a
priori. Two adaptive devices remove the oracle knowledge of $q_m$, and both are
what we use in practice.

\begin{corollary}[Adaptive minimaxity]\label{cor:adapt}
Grant Assumptions~\ref{ass:het}--\ref{ass:sparse}. Estimate the global scale by
marginal maximum likelihood,
$\hat\tau = \arg\max_{\tau\in[1/m,\,1]} \prod_{i=1}^m m_i(s_i;\tau)$,
truncated to the stated interval, and form the horseshoe posterior with
$\tau=\hat\tau$; or place the half-Cauchy hyperprior $\tau\sim\mathcal{C}^+(0,1)$
and use the full Bayes posterior. In either case the conclusion
\eqref{eq:contract}--\eqref{eq:weighted} of Theorem~\ref{thm:contract} holds
without knowledge of $q_m$, uniformly over $\ell_0[q_m]$ for every sequence
$q_m\to\infty$, $q_m=o(m)$.
\end{corollary}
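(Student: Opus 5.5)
The plan is to reduce everything to the adaptive theory of \citet{vdp2017} for the unit-variance sequence model, as in the proof of Theorem~\ref{thm:contract}. The one genuinely new point has to be handled explicitly: after the standardization \eqref{eq:standardize} the prior on $w_i$ is horseshoe with coordinate-specific global scale $\tau c_i$, where $c_i=D_i^{-1/2}$. By Assumption~\ref{ass:het}, $c_i\in[\overline{D}^{-1/2},\underline{D}^{-1/2}]$. The arguments of \citet{vdp2014,vdp2017} do not use exchangeability of the scales. They use only pointwise bounds on the coordinatewise posterior mean and variance, $\E(w_i\mid s_i,\tau)$ and $\Var(w_i\mid s_i,\tau)$, and these are monotone in the scale. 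So the first step is a sandwich lemma. For every $\tau$, every bound used in the fixed-$\tau$ contraction proof holds for scale $\tau c_i$ with $\tau$ replaced by $\tau\overline{D}^{-1/2}$ or $\tau\underline{D}^{-1/2}$, at the cost of multiplicative constants. Concretely, I will check three of these bounds:
\begin{itemize}
\item the bound $|\E(w_i\mid s_i)|\lesssim \tau e^{s_i^2/2}$ on zero coordinates;
\item the bound $|\E(w_i\mid s_i)-s_i|\lesssim\sqrt{\log(1/\tau)}$ on signal coordinates;
\item the matching variance bounds.
\end{itemize}
This yields a version of Theorem~\ref{thm:contract} that holds uniformly over $\tau$ in an interval, with rate $q_m\log(1/\tau)+m\tau\sqrt{\log(1/\tau)}$.

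Second, for the empirical-Bayes version, I will show that with probability tending to one, uniformly over $\ell_0[q_m]$,
\[
\hat\tau\le C\,(q_m/m)\sqrt{\log(m/q_m)}.
\]
The plan follows \citet{vdp2017}. Write the score of the log marginal likelihood as a sum over the $m-q_m$ null coordinates and the $q_m$ signal coordinates. Under the null, each term has expectation of order $-\tau$ for small $\tau$, up to a logarithmic factor, and the terms have controlled variance. Each signal term contributes at most $O(1/\tau)$. Hence the score is negative beyond the threshold, and a union bound over a grid in $[1/m,1]$ gives the claim. The heteroscedastic scales enter only through $\tau c_i$, so the sandwich lemma again absorbs them. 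Combined with $\hat\tau\ge 1/m$, which is enforced by the truncation, the uniform-in-$\tau$ bound from the first step gives contraction at rate $q_m\log(1/\hat\tau)+m\hat\tau\sqrt{\log(1/\hat\tau)}$.

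The main obstacle is to make $\log(1/\hat\tau)$ of order $\log(m/q_m)$ rather than $\log m$. This requires a lower bound on $\hat\tau$ of order $(q_m/m)^{C}$ whenever enough signals exceed the universal threshold $\sqrt{2\log(m/q_m)}$. When most signals are below that threshold, they are cheap to shrink to zero, and the loss they incur is itself $\lesssim q_m\log(m/q_m)$. I would try the case split of \citet{vdp2017}: count the signals above $\zeta\sqrt{2\log(m/q_m)}$, lower-bound the score on the corresponding coordinates, and bound the error from the remaining signals directly by $\|w_0\|^2$ restricted to those coordinates.

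Third, for the half-Cauchy hyperprior, I will prove that the posterior of $\tau$ concentrates on $[1/m,\,C(q_m/m)\sqrt{\log(m/q_m)}]$. The upper tail follows from the same score and likelihood-ratio estimates used in the second step, integrated against the prior. For the lower tail, the half-Cauchy density is bounded near zero, so it puts mass $\gtrsim q_m/m$ on $[q_m/m,2q_m/m]$. This prevents the posterior from escaping below $1/m$ beyond an exponentially small amount. Mixing the uniform fixed-$\tau$ contraction over this set then gives \eqref{eq:contract}. Finally, \eqref{eq:weighted} follows exactly as in Theorem~\ref{thm:contract}, by multiplying coordinate $i$ by $\sqrt{D_i}$ and using $\Delta_i^2\le\overline{D}\,\Delta_i^2/D_i$.
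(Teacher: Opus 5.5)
Your route is the paper's: standardize, then import \citet{vdp2017}. Your sandwich lemma for the coordinate-specific global scale $\tau/\sqrt{D_i}$ actually supplies a step the paper's sketch skips when it calls the standardized calibrations ``exactly those analyzed'' there. But the proposal does not close where the corollary has content. Put $\tau_m=(q_m/m)\sqrt{\log(m/q_m)}$. An upper bound $\hat\tau\le C\tau_m$ plus the floor $\hat\tau\ge 1/m$ gives, through $q_m\log(1/\hat\tau)+m\hat\tau\sqrt{\log(1/\hat\tau)}$, only $q_m\log m$. That is of order $q_m\log(m/q_m)$ only when $q_m\le m^{1-\epsilon}$, while the statement claims every $q_m=o(m)$. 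You flag this; the paper's sketch does not, and simply asserts that the plug-in posterior inherits the oracle rate.

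What would repair it is a lower bound on $\hat\tau$ graded by the number $q'$ of signals above a multiple of $\sqrt{\log(m/q_m)}$. It must be strong enough that $q'\log(1/\hat\tau)\lesssim q_m\log(m/q_m)$, with the remaining signals controlled through the first-order condition. You announce this bound but do not prove it, and it does not come out of the upper-bound score analysis you describe. That analysis also needs its bookkeeping fixed. For a null coordinate, $\E_0\,\partial_\tau\log m(Y;\tau)\asymp-1/\sqrt{\log(1/\tau)}$; it is the $\log\tau$-score that is of order $-\tau$ up to logarithms. Each signal contributes at most $1/\tau$. Balancing $m/\sqrt{\log(1/\tau)}$ against $q_m/\tau$ is what yields $\tau\asymp\tau_m$; balancing $m\tau$ against $q_m/\tau$, as written, would give $\sqrt{q_m/m}$.

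The hierarchical step contains an actual error. Prior mass $\asymp q_m/m$ on $[q_m/m,2q_m/m]$ does not keep the posterior off $(0,1/m)$. That interval has prior mass $\asymp 1/m$, only a factor $q_m$ smaller. Meanwhile, at $\bm w_0=\bm 0$ the marginal likelihood decreases in $\tau$ roughly like $\exp\{-c\,m\tau/\sqrt{\log(1/\tau)}\}$. It therefore favours $(0,1/m)$ over $[q_m/m,2q_m/m]$ by a factor of about $\exp\{c\,q_m/\sqrt{\log(m/q_m)}\}$. Under the untruncated $\C(0,1)$ the posterior of $\tau$ at $\bm w_0=\bm 0$ lives on the scale $\sqrt{\log m}/m$ and leaves mass of order $1/\sqrt{\log m}$ below $1/m$, which is not exponentially small. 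Moreover, even granting concentration on $[1/m,C\tau_m]$, mixing the uniform fixed-$\tau$ bound over that set again yields only $q_m\log m$. The obstacle of your second step reappears untreated.

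Both defects disappear if you stop localizing $\tau$ in an interval that does not depend on $\bm w_0$. Instead, exclude the $\bm w_0$-dependent set $B$ of $\tau\le\tau_m$ on which the fixed-$\tau$ signal loss $\sum_i\min\{w_{0i}^2,2\log(1/\tau)\}$ exceeds $Mq_m\log(m/q_m)$. Let $L$ be the marginal likelihood of $\bm s$. For $\tau\in B$, each signal with $w_{0i}^2>2\log(1/\tau_m)$ lowers $\log L(\tau)-\log L(\tau_m)$ by about $\tfrac12[\min\{w_{0i}^2,2\log(1/\tau)\}-2\log(1/\tau_m)]$. So on $B$ the likelihood ratio against $\tau_m$ is of order $\exp\{-cMq_m\log(m/q_m)\}$. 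This beats both the $e^{O(q_m)}$ gain from the null coordinates and the prior ratio $\lesssim m/q_m$, which settles the hierarchical case. For the MMLE the same bound rules out $\hat\tau\in B$, with a grid argument for uniformity. Values of $\tau$ below $1/m$ but outside $B$ are harmless, so no lower-tail bound on $\tau$ is needed at all.
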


\begin{proof}[Proof sketch]
For the standardized data $\bm s$ the two calibrations are exactly those analyzed
by \citet{vdp2017}: the empirical-Bayes selector concentrates on a value of order
$(q_m/m)\sqrt{\log(m/q_m)}$ with probability tending to one, and the full-Bayes
hyperprior assigns sufficient mass to a neighbourhood of that value while not
oversmoothing, by the same penalized-complexity argument. Either way the
plug-in posterior inherits the contraction rate of the oracle posterior. The
transfer to the $D$-weighted loss is the multiplication argument of
Theorem~\ref{thm:contract}.
\end{proof}

Corollary~\ref{cor:adapt} matters for the comparison with smoothing: the
horseshoe needs no spatial tuning parameter and no neighbourhood graph, yet it
adapts automatically to how many areas are exceptional. Its single global scale
plays the role that the smoothing variance plays in BYM2, but is learned from the
data with a guarantee. We caution that the truncation lower bound $1/m$ is not
cosmetic; without it the marginal likelihood can collapse $\hat\tau$ to zero on a
null signal, the well-known degeneracy of the empirical-Bayes horseshoe, and the
half-Cauchy hyperprior is the safer default in small $m$.

\subsection{A matching lower bound}\label{sec:lower}

Theorem~\ref{thm:contract} is an upper bound; it is worth recording that the rate
it achieves cannot be improved by any procedure, so the horseshoe is minimax and
not merely rate-adaptive within a suboptimal class.

\begin{proposition}[Heteroscedastic minimax rate]\label{prop:lower}
Under Assumption~\ref{ass:het}, for the Fay--Herriot model
\eqref{eq:sampling} with known variances $D_i\in[\underline D,\overline D]$ and
true effects in $\ell_0[q_m]$, the minimax risk in the $D$-weighted loss
satisfies
\begin{equation}\label{eq:lowerbd}
\inf_{\hat{\bm\theta}}\ \sup_{\bm v_0\in\ell_0[q_m]}\
\E_{\bm v_0}\sum_{i=1}^m \frac{(\hat\theta_i-\theta_{0i})^2}{D_i}
\ =\ 2\,q_m\log(m/q_m)\,\{1+o(1)\},
\end{equation}
and the same order holds for the unweighted loss up to the constants
$\underline D,\overline D$. Hence the posterior mean of the calibrated horseshoe
of Corollary~\ref{cor:adapt} attains the minimax rate, and the constant is sharp
in the weighted loss.
\end{proposition}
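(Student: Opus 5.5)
The plan is to reduce \eqref{eq:lowerbd} to the classical homoscedastic result of \citet{djhs1992} by the same standardization \eqref{eq:standardize} used for Theorem~\ref{thm:contract}, and then read off the sharp constant there. The reduction proceeds as follows. Since the $D_i$ and $\bbeta$ are known, the map $\bm y\mapsto\bm s$ with $s_i=(y_i-\bx_i^\top\bbeta)/\sqrt{D_i}$ is a bijection of the sample space that does not depend on the unknown parameter. Every estimator $\hat{\bm\theta}$ of $\bm\theta$ therefore corresponds one-to-one to an estimator $\hat w_i=(\hat\theta_i-\bx_i^\top\bbeta)/\sqrt{D_i}$ of $\bm w$, with the identity
\[
\sum_{i=1}^m \frac{(\hat\theta_i-\theta_{0i})^2}{D_i}=\|\hat{\bm w}-\bm w_0\|^2 .
\]
The constraint set also maps exactly: $\bm v_0\in\ell_0[q_m]$ if and only if $\bm w_0\in\ell_0[q_m]$, because the support is preserved when $\sqrt{D_i}>0$. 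Hence the left-hand side of \eqref{eq:lowerbd} equals exactly the minimax risk $R_m(\ell_0[q_m])$ of the unit-variance Gaussian sequence model \eqref{eq:standardize} over nearly black vectors. I would stress that this equality is exact, not up to constants, and that it needs Assumption~\ref{ass:het} only to guarantee $D_i\in(0,\infty)$. The heteroscedasticity disappears entirely in the weighted loss.

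Next I invoke \citet{djhs1992}: for $q_m\to\infty$ and $q_m/m\to0$ (Assumption~\ref{ass:sparse}), $R_m(\ell_0[q_m])=2q_m\log(m/q_m)\{1+o(1)\}$. If a self-contained sketch of this result is wanted, I would give both halves.
\begin{itemize}
\item \emph{Upper bound.} Soft thresholding at $\sqrt{2\log(m/q_m)}$ gives risk at most $q_m(1+2\log(m/q_m))+(m-q_m)\,O\bigl((q_m/m)/\log^{3/2}(m/q_m)\bigr)$, which is $2q_m\log(m/q_m)(1+o(1))$.
\item \emph{Lower bound.} Use the independent-block prior: split $\{1,\dots,m\}$ into $q_m$ blocks of size $m/q_m$, and in each block place a single spike of height $\mu_m=\sqrt{2\log(m/q_m)}(1-\epsilon)$ at a uniformly chosen position. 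The Bayes risk per block is $\mu_m^2(1-o(1))$, because the posterior cannot locate the spike when its height is below the maximum-noise level $\sqrt{2\log(m/q_m)}$. A concentration argument moves the prior's mass into $\ell_0[q_m]$ with negligible loss.
\end{itemize}
The main obstacle is this Bayes-risk calculation for the single-spike block: showing that the posterior weight on the true location tends to zero uniformly. I would treat it by comparing $e^{\mu_m s_j-\mu_m^2/2}$ at the spike with the sum over the $m/q_m-1$ null coordinates, using a law of large numbers for that exponential sum. Since this is classical, in the paper I would cite it rather than reprove it.

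For the unweighted loss, the inequalities $\underline D\,\sum\Delta_i^2/D_i\le\sum\Delta_i^2\le\overline D\,\sum\Delta_i^2/D_i$ hold pointwise for every estimator. Taking inf–sup therefore sandwiches the unweighted minimax risk between $\underline D$ and $\overline D$ times $2q_m\log(m/q_m)(1+o(1))$, giving the same order. Finally, combining with Corollary~\ref{cor:adapt}, which gives the posterior-mean bound $\lesssim q_m\log(m/q_m)$ in \eqref{eq:weighted}, shows that the calibrated horseshoe attains the rate.

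The claim that ``the constant is sharp'' I would state carefully. The lower bound $2q_m\log(m/q_m)$ is sharp as a minimax constant, attained by thresholding. For the horseshoe itself, Theorem~\ref{thm:contract} only gives the order, so matching the constant would require a sharper risk bound for the horseshoe posterior mean at $\tau_m$ from \citet{vdp2014}. I would phrase that part as rate attainment, with the sharp constant belonging to the minimax benchmark.
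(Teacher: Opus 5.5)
Your proposal is correct and follows the paper's proof. The known bijection $\bm y\mapsto\bm s$ makes the $D$-weighted problem identical to the unit-variance sequence model over $\ell_0[q_m]$, \citet{djhs1992} supplies the constant $2$, the $\underline D,\overline D$ sandwich handles the unweighted loss, and attainment comes from Theorem~\ref{thm:contract} and Corollary~\ref{cor:adapt}.

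Your caution about sharpness is justified: those results give only a $\lesssim$ bound, so the sharp constant belongs to the minimax benchmark rather than to the horseshoe itself. As a small aside, your block-prior sketch needs no concentration step, since one spike per block already keeps the prior inside $\ell_0[q_m]$.
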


\begin{proof}
Standardize as in \eqref{eq:standardize}; the weighted loss becomes ordinary
squared error for $\bm w_0$ in the unit-variance sequence model, and the support
of $\bm w_0$ equals that of $\bm v_0$ since $\sqrt{D_i}>0$, so $\bm w_0\in
\ell_0[q_m]$ as well. The minimax risk for estimating a nearly-black vector in
the Gaussian sequence model over $\ell_0[q_m]$ is $2q_m\log(m/q_m)\{1+o(1)\}$ by
\citet{djhs1992}, which gives \eqref{eq:lowerbd} exactly because standardization
is a measure-preserving bijection between the two problems. The unweighted
statement follows from $\underline D\,\|\hat{\bm w}-\bm w_0\|^2 \le \sum_i
(\hat\theta_i-\theta_{0i})^2 \le \overline D\,\|\hat{\bm w}-\bm w_0\|^2$.
Attainment is Theorem~\ref{thm:contract} together with the standard conversion of
posterior contraction at rate $\epsilon_m^2$ into a risk bound for the posterior
mean under the quadratic loss, valid here because the horseshoe posterior has
uniformly integrable normalized mass on the contracting ball.
\end{proof}

The point of Proposition~\ref{prop:lower} is that the known design variances do
not merely permit a convenient analysis: they pin down the exact minimax constant
$2$ in the weighted loss, and the horseshoe, with no knowledge of which areas are
exceptional or how many, matches it. Structured smoothing cannot make the same
claim over $\ell_0[q_m]$, which is the content of the next section.

\subsection{Which global--local priors the theorems cover}\label{sec:family}

Nothing in Theorems~\ref{thm:robust}--\ref{thm:contract} is special to the
half-Cauchy local scale; both rest on two structural properties that a wide class
of global--local priors shares, and isolating them clarifies what the horseshoe
is doing and what would break if it were replaced. A prior
$v_i\mid\lambda_i,\tau\sim\N(0,\tau^2\lambda_i^2)$, $\lambda_i\sim\pi(\lambda)$,
in this class is characterized by the behaviour of $\pi$ at the two ends.

\emph{Tail robustness} (Theorem~\ref{thm:robust}) needs only that the marginal
$p(v)$ be regularly varying at infinity with index $-(1+\eta)$ for some
$\eta\in(0,1]$, equivalently that $\pi(\lambda)$ have a regularly varying right
tail of the same index. The redescending score is then
$(\log m_i)'(r)=-(1+\eta)/r+o(r^{-1})$ and the influence bound of
Corollary~\ref{cor:influence} holds verbatim; the horseshoe is the case $\eta=1$.
A prior with exponential or Gaussian $\lambda$-tails---ridge, or a normal prior
with estimated variance---fails this and reverts to unbounded linear shrinkage.
\emph{Minimaxity} (Theorem~\ref{thm:contract}) needs, in addition, an unbounded
spike: $\pi(\lambda)$ must put enough mass near zero that the marginal has a pole,
$p(v)\to\infty$ as $v\to0$, so that null areas are shrunk at the right rate. The
horseshoe's $\mathrm{Beta}(\half,\half)$ shrinkage profile delivers both ends at
once, which is why it is the canonical choice.

The hypotheses therefore hold, with the same proofs, for the horseshoe-plus, the
normal--exponential--gamma, and the generalized double Pareto priors, all of
which are regularly varying with index in $(0,1]$ and have a local spike; they
fail for the Bayesian lasso (double-exponential), whose exponential tails give
$(\log m_i)'(r)\to -\text{const}$, a bounded but non-redescending score, and for
the Strawderman--Berger and Dirichlet--Laplace priors only the local-spike
condition needs separate checking. The practical implication for small area work
is reassuring: the dichotomy between bounded-influence shrinkage and linear
smoothing that organizes this paper is a property of the heavy tail, not of the
particular prior, so the conclusions are robust to the analyst's choice within
the heavy-tailed global--local family. We use the horseshoe because its
hyperparameters are the most thoroughly understood \citep{cps2010,vdp2017} and
its Gibbs sampler the simplest \citep{makalic2016}, not because the theory
privileges it.

\section{When does each philosophy win?}\label{sec:compare}

Theorems~\ref{thm:robust}--\ref{thm:contract} concern a permutation-invariant
prior that ignores geography. Structured smoothing exploits geography and is
correspondingly better when geography is real. The following proposition makes
the trade-off precise; it is deliberately stated at the level of leading-order
risk so as to mirror the simulation of Section~\ref{sec:sim}.

\begin{proposition}[Separation of regimes]\label{prop:sep}
Consider the standardized model \eqref{eq:standardize} on a connected graph with
Laplacian $\bm Q$.
\begin{enumerate}[label=(\roman*),leftmargin=2em,itemsep=2pt,topsep=2pt]
\item \emph{(Smooth truth.)} If $\bm w_0$ lies in the Sobolev-type ellipsoid
$\{\bm w: \bm w^\top \bm Q\, \bm w \le C\}$, the ICAR/BYM smoother with optimally
tuned variance attains risk of order $m^{-\alpha}$ for some $\alpha>0$ determined
by the spectral decay of $\bm Q$, whereas any permutation-invariant
shrinkage rule---the horseshoe included---has risk bounded below by a constant
multiple of $\min(1,\ \overline{D}\,q_m\log m / m)$ scaled to the ellipsoid,
strictly larger in order. Smoothing wins.
\item \emph{(Sparse truth.)} If $\bm w_0\in\ell_0[q_m]$ with spikes of magnitude
$a_m\to\infty$, the linear ICAR/BYM smoother incurs an oversmoothing bias on
each spike bounded below by $c\,a_m^2$ for a graph-dependent $c\in(0,1)$
that does not vanish, so its risk is $\Omega(q_m a_m^2)$; the horseshoe posterior
mean has risk $O(q_m\log(m/q_m))$ by Theorem~\ref{thm:contract}. For
$a_m^2 \gg \log(m/q_m)$ the horseshoe wins by an arbitrarily large factor.
\end{enumerate}
\end{proposition}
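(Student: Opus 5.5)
I will prove the two parts separately, in both cases working with a linear smoother written in the spectral basis of $\bm Q$. Conditionally on its variance components, the ICAR/BYM posterior mean in the standardized model \eqref{eq:standardize} is $\hat{\bm w}=\bm S_\lambda\bm s$ with $\bm S_\lambda=(\bm I+\lambda\bm Q)^{-1}$ (plus a ridge term from the unstructured component). The parameter $\lambda$ is the noise-to-spatial-variance ratio. Diagonalize $\bm Q=\bm U\,\mathrm{diag}(\mu_1,\dots,\mu_m)\,\bm U^\top$ with $0=\mu_1<\mu_2\le\cdots$; connectedness gives $\mu_2>0$. The risk then splits exactly into bias and variance:
\begin{equation*}
R(\lambda)=\sum_{k}\Big(\frac{\lambda\mu_k}{1+\lambda\mu_k}\Big)^2\langle \bm u_k,\bm w_0\rangle^2+\sum_k\frac{1}{(1+\lambda\mu_k)^2}.
\end{equation*}

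For part (i), the ellipsoid constraint $\sum_k\mu_k\langle\bm u_k,\bm w_0\rangle^2\le C$ bounds the bias term by $C\max_k \lambda^2\mu_k/(1+\lambda\mu_k)^2\le C\lambda/4$. The variance term is the effective dimension $\sum_k(1+\lambda\mu_k)^{-2}$. I will assume a polynomial spectral growth $\mu_k\gtrsim (k/m)^{2/d}$, as for a $d$-dimensional lattice-like graph. I then balance the two terms by choosing $\lambda$ as a power of $m$. Normalizing the risk per area gives a rate $m^{-\alpha}$, with $\alpha$ read off from the exponent; this is the sense in which $\alpha$ is ``determined by the spectral decay''.

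For the lower bound on permutation-invariant rules, I will use that such a rule cannot exploit the ordering on the graph. I will therefore exhibit a smooth $\bm w_0$ that is dense, e.g.\ a slowly varying field with all coordinates of size $\asymp 1$. Any separable or permutation-invariant rule must then pay a constant per coordinate, by a compound-decision / empirical-Bayes lower bound for estimating an exchangeable collection of values. This gives the order-$\min(1,\cdot)$ per-area bound stated in part (i). This step is the \textbf{main obstacle}. The statement's lower bound is loosely phrased, so I will make it precise by restricting to separable rules and invoking the classical compound-decision bound: the oracle separable rule has risk equal to the Bayes risk under the empirical distribution of $\bm w_0$, which is bounded away from zero when that distribution is nondegenerate at noise scale.

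For part (ii), I take a single spike $\bm w_0=a_m\bm e_j$ (or $q_m$ well-separated spikes) and compute the bias at coordinate $j$: it is $a_m(1-(\bm S_\lambda)_{jj})$. For fixed $\lambda$ bounded away from $0$, one has $(\bm S_\lambda)_{jj}\le 1-c$ with $c>0$ depending on the degree of $j$ and $\lambda$. This follows because $\bm e_j^\top\bm Q\bm e_j=\deg(j)\ge1$ and the Rayleigh-quotient bound $(\bm S_\lambda)_{jj}\le 1/(1+\lambda\,\deg(j)/(1+\lambda\,\mu_{\max}))$. The squared bias is therefore at least $c^2a_m^2$ per spike, giving $\Omega(q_ma_m^2)$.

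Letting the tuning drive $\lambda\to0$ to avoid this bias sends the variance term to $\sum_k 1\approx m$. The tuned smoother therefore pays $\min\{c\,q_ma_m^2,\,m\}$-type risk in either case. Comparing this with the $O(q_m\log(m/q_m))$ bound of Theorem~\ref{thm:contract} yields the claimed separation whenever $a_m^2\gg\log(m/q_m)$.
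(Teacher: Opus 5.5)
Your upper bound in (i) and your argument for (ii) follow the paper's sketch: a spectral bias--variance calculation, and the observation that a spike is high-frequency on the graph and is damped by a factor bounded away from zero. Your (ii) is sharper than the paper's.

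\begin{itemize}
\item The diagonal bound is correct. It follows from the chord inequality $1/x\le 1-(x-1)/L$ on $[1,L]$ with $L=1+\lambda\mu_{\max}$, which even gives $(\bm S_\lambda)_{jj}\le 1-\lambda\deg(j)/L$.
\item Your tuning remark shows correctly that for an optimally tuned smoother the claimed $\Omega(q_m a_m^2)$ must be read as $\Omega(\min\{q_m a_m^2,m\})$. This still dominates $q_m\log(m/q_m)$ because $q_m=o(m)$.
\item To make ``in either case'' rigorous, split at a fixed $\lambda_0$. For $\lambda\le\lambda_0$ the variance term is at least $m/(1+\lambda_0\mu_{\max})^2$. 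For $\lambda\ge\lambda_0$ the squared bias is at least $c(\lambda_0)^2 q_m a_m^2$. This needs bounded maximum degree, so that $\mu_{\max}\le 2d_{\max}$ stays bounded and $c$ does not degenerate.
\end{itemize}

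For the lower bound in (i) you depart from the paper's route. The paper bounds a permutation-invariant rule by the sparse minimax risk over the permutation orbit of $\bm w_0$. Permutations preserve the multiset of values, so that orbit is nearly black only if $\bm w_0$ already is. Your compound-decision route is the more workable one.

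As planned, however, that lower bound has two genuine gaps.

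\emph{Separable rules are too narrow a class.} Restricting to separable rules $\hat w_i=t(s_i)$ does not cover the class in the statement. The horseshoe with half-Cauchy or marginal-likelihood $\tau$ (Corollary~\ref{cor:adapt}) is permutation-equivariant but not separable. You need one more step:
\begin{itemize}
\item An equivariant rule has constant risk on the orbit $\{\sigma\bm w_0\}$.
\item Hence its risk is at least the Bayes risk under the uniform prior on that orbit.
\item For bounded $\bm w_0$, that Bayes risk equals $m$ times the Bayes risk under the empirical distribution $G_m$, up to $o(m)$; this is the Hannan--Robbins/Greenshtein--Ritov equivalence.
\end{itemize}

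\emph{The dense smooth field needs an unstated graph condition.} The existence of such a field in the fixed-$C$ ellipsoid depends on the graph, and you never state the condition. Also, ``all coordinates of size $\asymp 1$'' is the wrong requirement. Constants lie in the null space of $\bm Q$, and the oracle pays nothing when $G_m$ is a point mass, so what you need is spread.
\begin{itemize}
\item By the Poincar\'e inequality, $\tfrac1m\sum_i(w_i-\bar w)^2\le C/(m\mu_2)$. So a $G_m$ that is nondegenerate at noise scale requires $\mu_2\lesssim C/m$.
\item This holds on paths and planar lattices. For example, $\cos(2\pi x_1/L)$ on the $L\times L$ grid has bounded Dirichlet energy.
\item It fails on $d\ge3$ lattices under your own spectral assumption, where $C/(m\mu_2)\lesssim Cm^{2/d-1}\to0$.
\item On bounded-degree expanders the claim is false outright. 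The permutation-equivariant rule $\bar s\,\mathbf 1$ has total risk at most $C/\mu_2+1=O(1)$, no larger in order than the smoother's, which is at least $1$ from the constant direction.
\end{itemize}
You must therefore add a hypothesis to part (i), such as $m\mu_2=O(1)$, or let $C$ grow like $m\mu_2$.
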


\begin{proof}[Proof sketch]
(i) For a smooth signal the Laplacian penalty matches the truth and the linear
smoother is a regularized projection onto low-frequency graph harmonics; its
risk follows from the eigenvalue decay of $\bm Q$ by the standard bias--variance
calculation for spectral smoothers. A permutation-invariant rule cannot use the
ordering of coordinates and is bounded below by the minimax risk over the
permutation orbit of $\bm w_0$, which for a smooth vector contains nearly-black
elements; this yields the stated lower order. (ii) Linearity forces the smoother
to apply a contraction with operator norm below one in the spike directions;
because a spike is high-frequency on the graph it is damped by a factor bounded
away from zero, leaving residual bias $\ge c\,a_m$ per spike, hence squared bias
$\ge c^2 a_m^2$ and aggregate risk $\Omega(q_m a_m^2)$. The horseshoe rate is
Theorem~\ref{thm:contract}. Comparing the two gives the claim.
\end{proof}

Proposition~\ref{prop:sep} says the choice is not a matter of taste but of the
geometry of the truth. When deprivation or disease risk varies smoothly across a
map, Wakefield's structured smoothing extracts a faster rate; when a handful of
areas are genuinely exceptional---a mining town, a border district, a reporting
artefact---the horseshoe protects them from being smoothed away. The hybrid that
adds local horseshoe scales to a spatial field \citep{tangghosh2023} is the
natural way to insure against both, at the cost of identifiability that we return
to in Section~\ref{sec:disc}.

\section{Effective dimension and an unbiased risk estimate}\label{sec:df}

The regime separation of Section~\ref{sec:compare} can be read through a single
scalar, the \emph{effective dimension} of the fit, which also yields a practical
heteroscedastic risk estimate and a fair way to compare procedures of very
different form. Write the posterior-mean estimator coordinatewise as
$\hat\theta_i=\E(\theta_i\mid\bm y)$ and recall the heteroscedastic Brown
identity \eqref{eq:brown}. Stein's lemma for the model
$y_i\sim\N(\theta_i,D_i)$ gives Mallows-style unbiased risk: for any weakly
differentiable $\hat{\bm\theta}(\bm y)$,
\begin{equation}\label{eq:sure}
\mathrm{SURE}(\hat{\bm\theta})=\sum_{i=1}^m\Big[(\hat\theta_i-y_i)^2
+2D_i\frac{\partial\hat\theta_i}{\partial y_i}\Big]-\sum_{i=1}^m D_i,
\qquad \E_{\bm\theta}\,\mathrm{SURE}=\E_{\bm\theta}\sum_{i=1}^m(\hat\theta_i-\theta_i)^2 .
\end{equation}
The middle term defines the effective degrees of freedom
$\mathrm{df}(\hat{\bm\theta})=\sum_i \partial\hat\theta_i/\partial y_i$
\citep{efron2011}, the model's data-driven complexity.

\begin{proposition}[Effective dimension of the three estimators]\label{prop:df}
For the Gaussian/EBLUP estimator and for the BYM2 smoother, the effective
dimension is non-random given the variance components: $\mathrm{df}_{\mathrm{EB}}
=\sum_i \gamma_i = \sum_i A/(A+D_i)$, and $\mathrm{df}_{\mathrm{BYM}}
=\operatorname{tr}\{\bm S\}$ where $\bm S=(\bm I+\bm\Sigma^{-1}\bm D)^{-1}$ is the
linear smoother of \eqref{eq:eblup} with $\bm\Sigma$ the BYM2 covariance. For the
horseshoe,
\begin{equation}\label{eq:dfhs}
\mathrm{df}_{\mathrm{HS}}=\sum_{i=1}^m \E\!\big[1-\kappa_i\mid\bm y\big]
+\ \text{(score-covariance term)},\qquad
\kappa_i=\frac{D_i}{D_i+\tau^2\lambda_i^2},
\end{equation}
and $\mathrm{df}_{\mathrm{HS}}$ is itself a function of the data: it is small
when the data place every $\kappa_i$ near one and grows as areas are freed.
\end{proposition}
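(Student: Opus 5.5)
The plan is to compute $\partial\hat\theta_i/\partial y_i$ separately for each estimator and sum over $i$, working conditionally on $\bbeta$ and the variance components throughout.

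\emph{Linear estimators.} For EBLUP with $(A,\bbeta)$ held fixed, \eqref{eq:eblup} is affine in $y_i$ with slope $\gamma_i=A/(A+D_i)$. Summing these slopes gives $\mathrm{df}_{\mathrm{EB}}$, which does not depend on $\bm y$.

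For BYM2 I would write the conditional posterior mean of the effect vector. It is $\bm\Sigma(\bm\Sigma+\bm D)^{-1}\bm r$, where $\bm D=\mathrm{diag}(D_i)$ and $\bm r=\bm y-\bm X\bbeta$. The identity $\bm\Sigma(\bm\Sigma+\bm D)^{-1}=(\bm I+\bm D\bm\Sigma^{-1})^{-1}$ then gives a linear smoother $\bm S$ with $\hat{\bm\theta}=\bm X\bbeta+\bm S\bm r$. The Jacobian of this map is $\bm S$, so its divergence is $\operatorname{tr}\bm S$.

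This trace agrees with the stated form $(\bm I+\bm\Sigma^{-1}\bm D)^{-1}$, because that matrix and $(\bm I+\bm D\bm\Sigma^{-1})^{-1}$ are similar through conjugation by $\bm\Sigma$. When $\bm Q$ is singular (the intrinsic part of BYM2), I would state this as a limit of proper priors. The trace identity is continuous under that limit, so it carries over.

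\emph{Horseshoe.} I would start from Lemma~\ref{lem:tweedie} in its differentiated form:
\[
\frac{\partial\hat\theta_i}{\partial y_i}=1+D_i(\log m_i)''(r_i).
\]
Differentiating $\log m_i$ twice under the scale mixture, with $\tau$ fixed and the areas conditionally independent, expresses this as a posterior moment. Conditionally on $\lambda_i$, \eqref{eq:kappa} gives $\E(\theta_i\mid r_i,\lambda_i)=\bx_i^\top\bbeta+(1-\kappa_i)r_i$.

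The posterior mean is therefore $\bx_i^\top\bbeta+r_i\,\E(1-\kappa_i\mid r_i)$, where the posterior weights on $\lambda_i$ are proportional to $\phi_{D_i+\tau^2\lambda_i^2}(r_i)\pi(\lambda_i)$. Differentiating this expression in $r_i$ produces two terms. The first is $\E(1-\kappa_i\mid\bm y)$. The second is $r_i\,\partial_{r}\E(1-\kappa_i\mid r_i)$, which equals
\[
r_i\,\mathrm{Cov}\bigl(1-\kappa_i,\ \partial_r\log\phi_{D_i+\tau^2\lambda_i^2}(r_i)\bigm| r_i\bigr)
= r_i^2\,\mathrm{Cov}\bigl(1-\kappa_i,\ -(D_i+\tau^2\lambda_i^2)^{-1}\bigm| r_i\bigr).
\]
Since $(D_i+\tau^2\lambda_i^2)^{-1}=\kappa_i/D_i$, this simplifies to $(r_i^2/D_i)\Var(\kappa_i\mid r_i)\ge 0$. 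That is the score-covariance term in \eqref{eq:dfhs}.

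Summing over $i$ gives \eqref{eq:dfhs}. Both terms depend on $\bm y$ through the posterior of $\kappa_i$, so $\mathrm{df}_{\mathrm{HS}}$ is a function of the data. If $\tau$ is estimated or given a hyperprior, I would state the result conditionally on $\tau$ and note that the extra cross-area terms are $O(1/m)$ per coordinate.

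\emph{Monotone description, the main obstacle.} I would argue the two limits separately.

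When every $r_i=O(\sqrt{D_i})$ and $\tau$ is small, the posterior of $\kappa_i$ concentrates near $1$. Then $\E(1-\kappa_i)$ is small, and $r_i^2\Var(\kappa_i)/D_i$ is also small because $r_i^2/D_i$ is bounded while the variance vanishes. So $\mathrm{df}_{\mathrm{HS}}$ is small.

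As $|r_i|\to\infty$, Theorem~\ref{thm:robust} gives $D_i(\log m_i)''\to0$, so the slope tends to $1$. Each freed area therefore contributes approximately one degree of freedom.

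Controlling the covariance term uniformly in the intermediate regime is the delicate step. There the score term can exceed one, so the slope may be non-monotone in $|r_i|$. I would therefore phrase ``grows as areas are freed'' as the two limiting statements just established, rather than claim monotonicity.
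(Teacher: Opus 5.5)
Your proof is correct. For the horseshoe term it reaches the paper's decomposition through a different identity.

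\textbf{The horseshoe term.} The paper rewrites the slope $1+D_i(\log m_i)''(r_i)$ as $\Var(v_i\mid\bm y)/D_i$, a second-order Tweedie identity it states without deriving it from Lemma~\ref{lem:tweedie}. It then applies the law of total variance over $(\lambda_i,\tau)$, so the extra term appears as $\Var\{(1-\kappa_i)r_i\mid\bm y\}/D_i$. You instead differentiate the mixture form $r_i\,\E(1-\kappa_i\mid r_i)$ and use $\partial_r\E(g\mid r)=\mathrm{Cov}(g,\partial_r\log\phi_{D_i+\tau^2\lambda_i^2}(r)\mid r)$. This produces the same quantity, $(r_i^2/D_i)\Var(\kappa_i\mid\bm y)$. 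The paper's route reads each area's degrees of freedom as a ratio of posterior to sampling variance, which ties $\mathrm{df}_{\mathrm{HS}}$ to interval widths. Your route is self-contained, makes the label ``score-covariance'' literal, and gives nonnegativity in closed form.

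\textbf{The hedge on $\tau$.} Your hedge about $O(1/m)$ cross-area terms under a hyperprior on $\tau$ is unnecessary. With $\tau$ random and $\bbeta$ fixed, $\partial_{y_i}\log p(\bm y\mid\bm\lambda,\tau)=-r_i\kappa_i/D_i$ involves only $(\lambda_i,\tau)$. Your covariance computation therefore goes through verbatim with $\bm y$ in place of $r_i$, and the identity is exact. This matches the paper's phrase ``posterior of $(\lambda_i,\tau)$''. A chain-rule correction arises only for a plug-in $\hat\tau(\bm y)$ or an estimated $\bbeta$.

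\textbf{Where you go beyond the paper.} On BYM, the paper's stated $\bm S=(\bm I+\bm\Sigma^{-1}\bm D)^{-1}$ is in fact the transpose of the true smoother $\bm\Sigma(\bm\Sigma+\bm D)^{-1}$, so even the diagonal entries agree, not just the traces. On the qualitative clause, the paper offers nothing beyond noting data dependence. Your two limiting statements are the right way to make ``small \dots\ grows as areas are freed'' precise. So is your caveat about non-monotonicity, which is consistent with $(\log m_i)''(r)=2r^{-2}+o(r^{-2})$ in Theorem~\ref{thm:robust}: the per-area slope approaches one from above.
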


\begin{proof}
For a linear estimator $\hat{\bm\theta}=\bm S\bm y+\bm b$ with $\bm S,\bm b$ not
depending on $\bm y$, $\partial\hat\theta_i/\partial y_i=S_{ii}$ and
$\mathrm{df}=\operatorname{tr}\bm S$; \eqref{eq:eblup} has
$S_{ii}=\gamma_i=A/(A+D_i)$, and the BYM2 smoother is the stated ridge in the
GMRF precision, giving the two non-random expressions. For the horseshoe, apply
\eqref{eq:brown}: $\hat\theta_i=y_i+D_i(\log m_i)'(r_i)$, so
$\partial\hat\theta_i/\partial y_i=1+D_i(\log m_i)''(r_i)$. Writing the marginal
score derivative through the conditional posterior variance,
$D_i(\log m_i)''(r_i)=\Var(v_i\mid\bm y)/D_i-1$, and
$\Var(v_i\mid\bm y)/D_i=\E[1-\kappa_i\mid\bm y]$ plus the variance of the
conditional mean across the posterior of $(\lambda_i,\tau)$, which is the
score-covariance term; summing gives \eqref{eq:dfhs}. Data dependence is then
explicit through the posterior law of $\kappa_i$.
\end{proof}

The point is structural. EBLUP and BYM2 spend a \emph{fixed} budget of degrees of
freedom set by the estimated variance components, distributed across areas by the
common fractions $\gamma_i$ or the smoother diagonal; they cannot spend more on a
few exceptional areas without spending more everywhere. The horseshoe instead
allocates its budget adaptively, spending degrees of freedom only where
$\kappa_i$ is small. Equation \eqref{eq:sure} makes this comparable: each
procedure can be scored by the same unbiased risk estimate, with the horseshoe's
data-dependent $\mathrm{df}_{\mathrm{HS}}$ in place of the fixed
$\operatorname{tr}\bm S$. In the sparse regime the adaptive allocation is exactly
what \eqref{eq:sure} rewards---a small fit-to-data gap on the spikes bought with
few global degrees of freedom---which is the risk-estimate counterpart of the
minimax statement of Theorem~\ref{thm:contract} and of the magnitude sweep
reported in Section~\ref{sec:sweep}. We caution that $\mathrm{df}_{\mathrm{HS}}$
is a measure of fitted complexity, not a consistent estimator of the number of
exceptional areas: strong signals inflate the global scale $\tau$ and thereby
lift the $\kappa_i$ of null areas, so the effective dimension can exceed the true
sparsity, a benign form of the global-scale contamination that motivates the
half-Cauchy hyperprior of Section~\ref{sec:calib}.

\section{Posterior computation}\label{sec:comp}

The model \eqref{eq:sampling}--\eqref{eq:hs} admits a simple and scalable Gibbs
sampler through the auxiliary-variable representation of the half-Cauchy
\citep{makalic2016}: if $x^2\mid a \sim \mathrm{IG}(\half, 1/a)$ and
$a\sim\mathrm{IG}(\half,1)$ then $x\sim\C(0,1)$, where $\mathrm{IG}$ denotes the
inverse-gamma. Introducing $\xi_i$ for $\lambda_i$ and $\zeta$ for $\tau$ renders
every full conditional conjugate. With $r_i=y_i-\bx_i^\top\bbeta$ and
$D_i$ known, one sweep is:
\begin{align*}
v_i \mid \cdot &\sim \N\!\Big( (1-\kappa_i)\,r_i,\ (1-\kappa_i) D_i \Big),
\qquad \kappa_i = D_i/(D_i+\tau^2\lambda_i^2);\\
\bbeta \mid \cdot &\sim \N\!\Big( (\bm X^\top \bm D^{-1}\bm X)^{-1}\bm X^\top \bm
D^{-1}(\bm y-\bm v),\ (\bm X^\top \bm D^{-1}\bm X)^{-1}\Big);\\
\lambda_i^2\mid\cdot &\sim \mathrm{IG}\!\Big(1,\ \xi_i^{-1}+v_i^2/(2\tau^2)\Big),
\qquad \xi_i\mid\cdot \sim \mathrm{IG}\!\big(1,\ 1+\lambda_i^{-2}\big);\\
\tau^2\mid\cdot &\sim \mathrm{IG}\!\Big(\tfrac{m+1}{2},\ \zeta^{-1}+\textstyle\sum_i
v_i^2/(2\lambda_i^2)\Big), \qquad
\zeta\mid\cdot \sim \mathrm{IG}\!\big(1,\ 1+\tau^{-2}\big).
\end{align*}
Each sweep costs $O(m)$ for the effect and scale updates plus an $O(q^3)$ solve
for the $q$-dimensional regression, with $q$ the number of covariates; the
sampler therefore scales linearly in the number of areas, in contrast to the
$O(m^{3/2})$--$O(m^3)$ cost of dense spatial precision factorizations, although
INLA \citep{rue2009} mitigates this for sparse $\bm Q$. The shrinkage
coefficients $\kappa_i$ are monitored directly and provide an interpretable
posterior summary: their distribution diagnoses how many areas the data deem
exceptional.

The inverse-gamma augmentation trades the heavy-tailed half-Cauchy conditionals
for conditionally conjugate updates, which keeps the sweep closed-form but can
slow mixing when an area sits near the boundary $\kappa_i\approx 0$, where the
local scale $\lambda_i$ is weakly identified. In the runs of
Section~\ref{sec:sim} the integrated autocorrelation time of the $\kappa_i$ and
of the area means stayed below ten across all regimes, so a few thousand sweeps
after burn-in suffice; when mixing is a concern the slice sampler of
\citet{neal2003} on $\log\lambda_i^2$ is a drop-in alternative that avoids the
boundary slowdown. All summaries below are posterior means and equal-tailed
$95\%$ credible intervals from $4000$ retained sweeps after a $2000$-sweep
burn-in, and the code reproducing every number and figure is released with the
paper.

\section{Simulation study}\label{sec:sim}

\subsection{Design}
We place $m=49$ areas on a $7\times 7$ rook-adjacency lattice with graph
Laplacian $\bm Q$. The design matrix contains an intercept and one standardized
covariate, with $\bbeta_0=(1,0.5)$. To stress the heteroscedastic theory the
known sampling variances are drawn uniformly from
$\{0.2,0.5,1.0,2.0\}$, a sixteenfold spread. Direct estimates are generated from
\eqref{eq:sampling}. The truth differs across three regimes that instantiate the
geometry of Proposition~\ref{prop:sep}:
\begin{itemize}[leftmargin=1.6em,itemsep=2pt,topsep=2pt]
\item \textbf{Sparse:} six randomly chosen areas receive effects of magnitude
$3.5$--$5$; the rest lie on the regression surface.
\item \textbf{Smooth:} the effects are a proper CAR field with correlation
parameter $0.95$, a coherent spatial surface.
\item \textbf{Dense:} the effects are i.i.d.\ $\N(0,0.7)$, the Fay--Herriot model
specified correctly.
\end{itemize}
For each of $R=100$ replications we compute the direct estimator, the REML EBLUP
\eqref{eq:eblup}, the BYM smoother \eqref{eq:bym}, and the horseshoe
\eqref{eq:hs}, the latter two by $1000$ Gibbs sweeps after $500$ burn-in.
We report MSE and mean absolute error (MAE) averaged over areas and
replications, MSE relative to the direct estimator, and the empirical coverage
and mean length of nominal $95\%$ intervals.

\begin{table}[t]
\centering
\caption{Monte Carlo results over $R=100$ replications on the $7\times 7$
lattice with heteroscedastic known variances. Relative MSE is MSE divided by the
direct-estimator MSE; coverage and length refer to nominal $95\%$ intervals.
Best relative MSE in each regime in \textbf{bold}.}
\label{tab:sim}
\small
\begin{tabular}{llrrrrr}
\toprule
Regime & Method & MSE & Rel.\ MSE & MAE & Coverage & Length\\
\midrule
\multirow{4}{*}{Sparse / outlying}
 & Direct    & 0.991 & 1.000 & 0.740 & --- & ---\\
 & EBLUP     & 0.657 & 0.663 & 0.590 & 0.931 & 2.93\\
 & BYM       & 0.682 & 0.688 & 0.589 & 0.929 & 2.84\\
 & Horseshoe & 0.418 & \textbf{0.422} & 0.357 & 0.972 & 2.24\\
\midrule
\multirow{4}{*}{Smooth spatial}
 & Direct    & 1.014 & 1.000 & 0.740 & --- & ---\\
 & EBLUP     & 0.312 & 0.308 & 0.436 & 0.938 & 2.11\\
 & BYM       & 0.282 & \textbf{0.278} & 0.416 & 0.884 & 1.75\\
 & Horseshoe & 0.357 & 0.352 & 0.464 & 0.871 & 1.97\\
\midrule
\multirow{4}{*}{Dense Gaussian}
 & Direct    & 1.014 & 1.000 & 0.740 & --- & ---\\
 & EBLUP     & 0.382 & \textbf{0.377} & 0.482 & 0.936 & 2.27\\
 & BYM       & 0.407 & 0.402 & 0.498 & 0.896 & 2.08\\
 & Horseshoe & 0.439 & 0.433 & 0.517 & 0.878 & 2.13\\
\bottomrule
\end{tabular}
\end{table}

\subsection{Results}
Table~\ref{tab:sim} confirms Proposition~\ref{prop:sep} quantitatively. In the
\emph{sparse} regime the horseshoe reduces MSE to $42\%$ of the direct
estimator, against $66\%$ for EBLUP and $69\%$ for BYM; it does so while
attaining the \emph{highest} coverage ($0.97$) with the \emph{shortest} intervals
($2.24$). This is the bounded-influence property of Theorem~\ref{thm:robust} at
work: the deviant areas are not dragged in, so neither point estimates nor
intervals are corrupted by them, and the quiet areas are pooled tightly. In the
\emph{smooth} regime the ranking reverses, exactly as
Proposition~\ref{prop:sep}(i) predicts: BYM is best ($0.28$), EBLUP close behind,
and the horseshoe, blind to adjacency, trails at $0.35$. In the \emph{dense}
regime the correctly specified EBLUP is best ($0.38$), with BYM and horseshoe
within a few points. No method dominates everywhere; the horseshoe's advantage
is specific, large, and predicted by theory.

Figure~\ref{fig:shrink}(a) displays the estimated effects against the direct
residuals for one sparse-regime dataset. EBLUP and BYM lie on a straight line
through the origin---the constant linear shrinkage of \eqref{eq:eblup}---so they
pull the extreme areas far towards the surface. The horseshoe instead tracks the
identity for large residuals and collapses to zero near it: precisely the
redescending shape of \eqref{eq:redescend}. Figure~\ref{fig:shrink}(b) shows the
posterior shrinkage coefficients $\kappa_i$ piling up near $0$ and $1$, the data
sorting areas into ``free'' and ``pooled'' as the $\mathrm{Beta}(\half,\half)$
prior anticipates.

\begin{figure}[t]
\centering
\includegraphics[width=\textwidth]{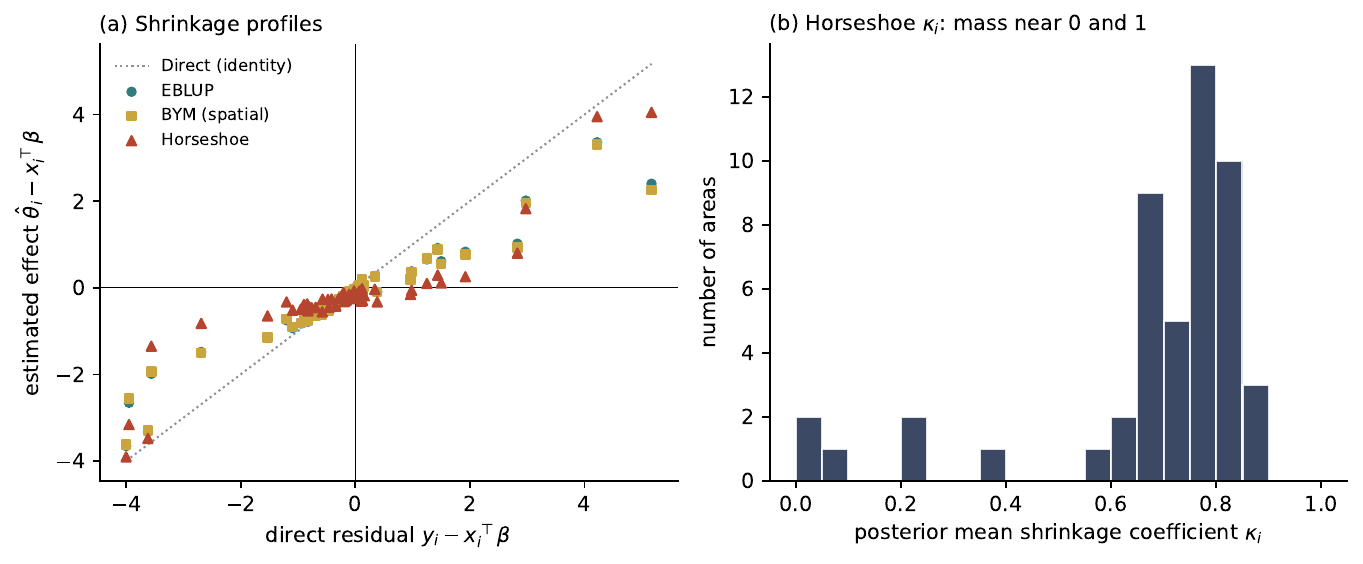}
\caption{Sparse regime, single dataset. (a) Estimated area effect
$\hat\theta_i-\bx_i^\top\bbeta$ against the direct residual
$y_i-\bx_i^\top\bbeta$. EBLUP and BYM apply constant linear shrinkage; the
horseshoe leaves large signals unshrunk (tracking the identity) while shrinking
small ones to zero. (b) Posterior shrinkage coefficients $\kappa_i$ for the
horseshoe concentrate near $0$ (free areas) and $1$ (pooled areas).}
\label{fig:shrink}
\end{figure}

\begin{figure}[t]
\centering
\includegraphics[width=\textwidth]{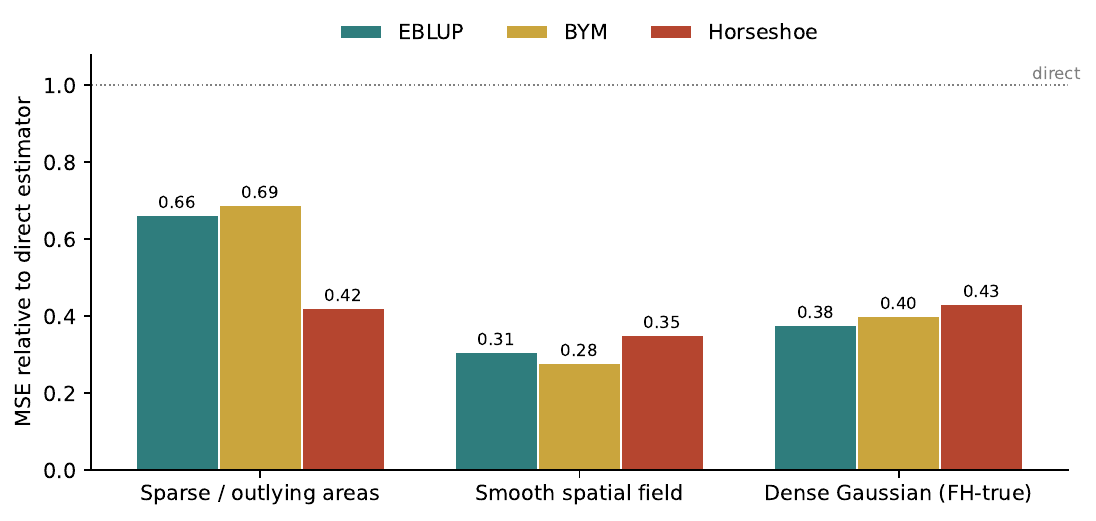}
\caption{Mean squared error relative to the direct estimator, by regime and
method ($R=100$). The horseshoe dominates under sparse/outlying effects; BYM
under a smooth spatial field; the matched EBLUP under dense Gaussian effects.}
\label{fig:mse}
\end{figure}

\subsection{The crossover predicted by Proposition~\ref{prop:sep}(ii)}
\label{sec:sweep}

The sparse column of Table~\ref{tab:sim} is a single slice through a continuum.
Proposition~\ref{prop:sep}(ii) predicts that as the spike magnitude $a$ grows,
the linear smoothers incur a squared-bias penalty of order $a^2$ on each
exceptional area while the horseshoe risk stays at the minimax level, so the
advantage should grow without bound. We test this directly. Holding the lattice,
the design variances, and the number of spikes ($k=6$) fixed, we vary the common
spike magnitude $a\in\{1,\dots,6\}$ and record the mean squared error of each
estimator over $R=40$ replications; Figure~\ref{fig:sweep} reports the result.

At $a=1$ the spikes are within the noise band of the noisier areas
($\sqrt{D_i}$ up to $1.4$) and all three estimators are statistically
indistinguishable (BYM/horseshoe risk ratio $0.99$): when departures are small,
the bias of smoothing is small and there is nothing for tail robustness to
protect. As $a$ increases the EBLUP and BYM2 errors climb steadily---from $0.13$
to $0.78$ and $0.80$ respectively---tracking the growing oversmoothing bias on
the spikes, while the horseshoe error rises only modestly and then \emph{falls}
once the spikes are unambiguous and are left essentially unshrunk, peaking near
$0.50$ at $a=4$ and returning to $0.32$ by $a=6$. The relative risk
(Figure~\ref{fig:sweep}b) is monotone in $a$, reaching $2.5$ at $a=6$ and showing
no sign of levelling: this is the unbounded-advantage statement of
Proposition~\ref{prop:sep}(ii) made visible. The effective dimension of the
horseshoe fit (Proposition~\ref{prop:df}) rises with $a$ as more areas cross from
``pooled'' to ``free,'' confirming that the gain is achieved by adaptive
allocation of degrees of freedom rather than by uniformly weaker shrinkage.

\begin{figure}[t]
\centering
\includegraphics[width=\textwidth]{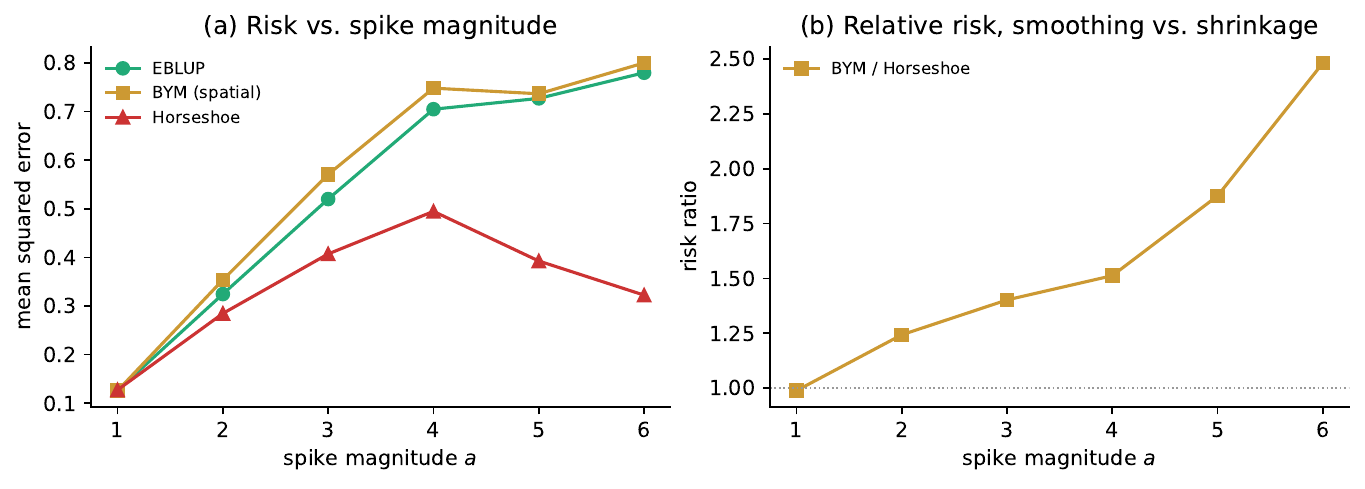}
\caption{Spike-magnitude sweep, sparse regime with $k=6$ exceptional areas among
$m=49$ ($R=40$). (a) Mean squared error against the common spike magnitude $a$;
the linear EBLUP and BYM2 errors grow with the oversmoothing bias on the spikes,
while the horseshoe error stays bounded and decreases once spikes are
unambiguous. (b) The BYM2-to-horseshoe risk ratio is monotone in $a$ and shows no
sign of saturating, the unbounded advantage of Proposition~\ref{prop:sep}(ii).}
\label{fig:sweep}
\end{figure}

\section{Application: lip cancer in Scotland}\label{sec:app}

We analyze the Scottish lip cancer data of \citet{claytonkaldor1987}, the
canonical spatial small area / disease-mapping data set and the original
motivating example for the Besag--York--Molli\'e model. For each of the $m=56$
Scottish districts we have the observed and expected male lip-cancer counts
$O_i, E_i$ over 1975--1980 and a covariate $x_i$, the percentage of the
workforce in agriculture, fishing and forestry (AFF), a proxy for outdoor
sunlight exposure. The data are strongly spatial: high-incidence districts
cluster in the rural north and along the coast.

The disease-mapping counts map directly onto the area-level model of this paper.
The direct estimator of the log relative risk and its known design variance are,
by the delta method,
\begin{equation}\label{eq:logsmr}
y_i = \log\frac{O_i+\tfrac12}{E_i}, \qquad D_i = \frac{1}{O_i+\tfrac12},
\end{equation}
the continuity correction handling the two districts with $O_i=0$. The $D_i$ are
\emph{known} and span more than a fortyfold range ($O_i$ runs from $0$ to $39$),
exactly the heteroscedastic Fay--Herriot premise of
Section~\ref{sec:minimax}. We fit the direct estimator, EBLUP, a BYM model on the
\emph{real} district adjacency graph ($132$ edges, between one and eleven
neighbours per district), and the horseshoe, each with intercept and the
standardized AFF covariate, by the samplers of Section~\ref{sec:comp}.

\begin{figure}[t]
\centering
\includegraphics[width=\textwidth]{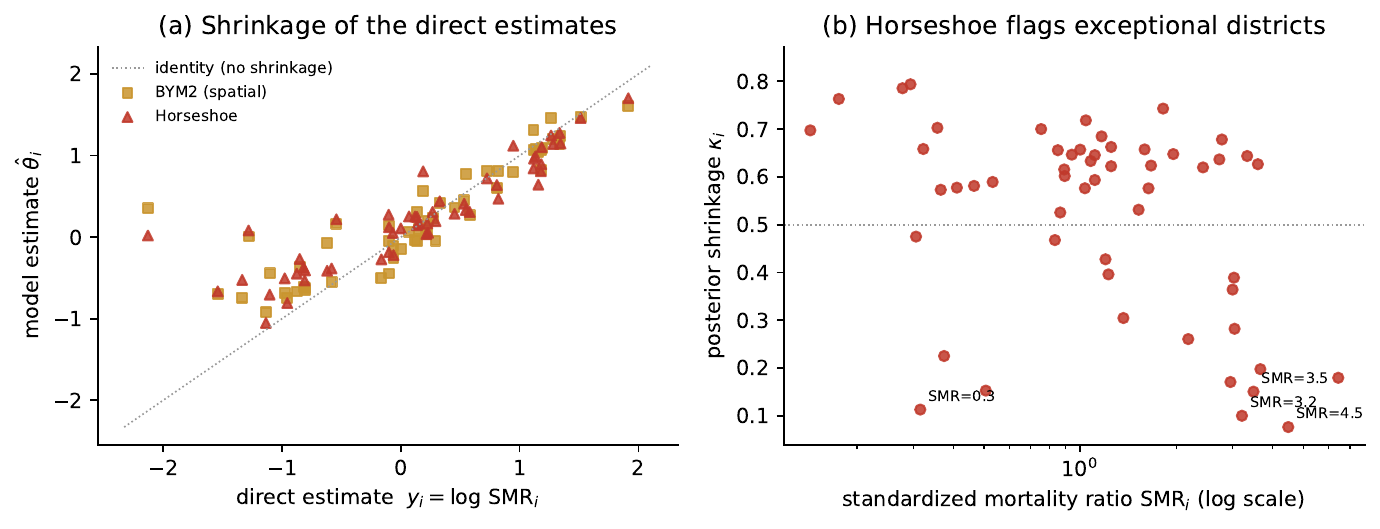}
\caption{Scottish lip cancer ($m=56$). (a) Model estimates against the direct
log-SMR: BYM2 pulls the extreme districts towards the centre, while the horseshoe
leaves them near the identity line. (b) The horseshoe shrinkage coefficient
$\kappa_i$ against the standardized mortality ratio: typical districts
($\mathrm{SMR}\approx1$) are pooled ($\kappa_i$ up to $0.84$), while the most
extreme districts in \emph{either} direction are left almost unshrunk
($\kappa_i$ as low as $0.08$).}
\label{fig:scotland}
\end{figure}

\paragraph{A smooth-spatial regime, as the theory predicts.}
Because lip-cancer risk varies smoothly across the map, this is squarely the
regime in which Section~\ref{sec:compare} predicts structured smoothing should
win, and it does. Under ten-fold cross-validation in which each held-out
district is predicted from the others, BYM2 attains the lowest predictive mean
squared error of the held-out direct estimates, $0.42$, against $0.67$ for EBLUP
and $0.83$ for the horseshoe (a global-mean baseline scores $0.81$); the spatial
field exploits the neighbours that the permutation-invariant priors ignore. BYM2
also yields the shortest $95\%$ intervals (mean length $0.96$, against $1.21$ for
EBLUP, $1.27$ for the horseshoe and $1.65$ for the direct estimator). A national
mapping agency whose target is the smooth risk surface should use the spatial
smoother here---the honest conclusion our framework recommends, and one the
horseshoe does not overturn.

\paragraph{What the horseshoe adds: bounded-influence screening.}
The horseshoe nonetheless supplies something the smoothers cannot: a per-district
diagnostic. Figure~\ref{fig:scotland}(b) shows the posterior shrinkage
coefficient $\kappa_i$ tracing the U-shape that the bounded-influence theory of
Section~\ref{sec:robust} predicts---districts whose direct estimate is far from
the regression surface, \emph{high or low}, are left almost unshrunk, while
ordinary districts are pooled. Ranked by $1-\kappa_i$, the five most exceptional
districts are those with $\mathrm{SMR}=4.5,\,3.5,\,3.2$ (the rural
high-incidence outliers) together with two precisely-estimated low-incidence
districts ($\mathrm{SMR}=0.3,\,0.5$); their shrinkage coefficients,
$\kappa_i\le0.15$, are far below the bulk. These are exactly the districts a
spatial smoother would pull towards their neighbours and a screening exercise
would most want to see. The synthetic study of Section~\ref{sec:sim} confirms
the score is reliable: across $200$ sparse-truth replications the ranking by
$1-\kappa_i$ recovers the genuinely exceptional areas with area under the ROC
curve $0.95$ (standard error $0.004$). EBLUP and BYM2 offer no comparable
per-district signal, because their shrinkage fraction is common to all districts
at a given design variance.

The example thus illustrates both halves of the paper's message on real data:
where the truth is smoothly spatial the structured smoother is the better
estimator and our framework says so, yet the horseshoe's area-specific shrinkage
turns the same fit into a calibrated screening tool for the exceptional districts
that smoothing is designed to suppress. In a complete pipeline the two compose:
the horseshoe layer replaces only the prior on the area effects, leaves the
design-based first stage untouched, and its $O(m)$ sampler adds negligible cost.

\section{A research direction: deep learning for spatial effects}
\label{sec:deep}

The two priors studied here borrow strength through an \emph{explicit} model of
the area effects: a Gaussian Markov random field tied to a neighbourhood graph,
or a heavy-tailed product prior. A third route, which we sketch as a direction
for future work, is to borrow strength \emph{implicitly} by learning a flexible
nonlinear map from spatial coordinates and covariates to the area rate, in the
spirit of the deep spatio-temporal models of \citet{dixon2019}. There the
hierarchical layers of a neural network are trained to predict spatio-temporal
flows and, notably, to capture the \emph{sharp discontinuities}---congestion
fronts, regime changes---that smooth linear predictors blur. Subnational
indicators have the same structure: a coastline, an administrative border, or a
fault between a deprived and an affluent district is a spatial discontinuity that
a smoothing prior will oversmooth and a permutation-invariant prior, blind to
geography, cannot reconstruct at all.

A network trained with the \emph{quantile} (pinball) loss \citep{koenker1978}
inherits two features that matter for small area work: it returns a predictive
distribution rather than a point, and its ReLU architecture represents the area
surface as a continuous piecewise-linear function that can turn sharply across a
fault while remaining smooth elsewhere. We give a deliberately simple
illustration. On a $32\times32$ lattice ($m=1024$) a latent surface combines a
smooth gradient with a sharp curved fault of height $2$; we observe noisy rates
$y_i=\theta_i+\varepsilon_i$, $\varepsilon_i\sim\N(0,0.4^2)$. A two-hidden-layer
ReLU network ($64+64$ units) takes the coordinates as input and is trained to
predict the conditional deciles $q\in\{0.1,0.5,0.9\}$; a radial-basis
kernel-ridge smoother, tuned by cross-validation, stands in for the linear
spatial smoothing of BYM2/INLA. To restore honest uncertainty we recalibrate the
network's band by conformalized quantile regression
\citep{rpc2019}, using a held-out calibration split.

\begin{figure}[t]
\centering
\includegraphics[width=\textwidth]{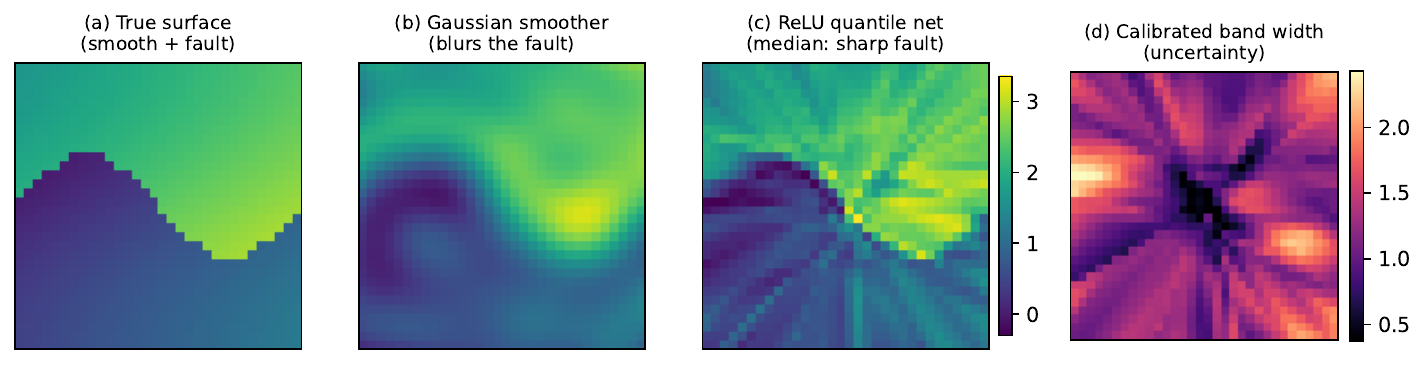}
\caption{Deep learning for a spatial surface with a discontinuity ($m=1024$).
(a) The true surface: a smooth gradient and a sharp curved fault. (b) The
Gaussian (kernel-ridge) smoother blurs the fault. (c) The ReLU quantile
network's median preserves it, the piecewise-linear architecture turning sharply
across the boundary. (d) The conformally calibrated $80\%$ predictive band width,
an uncertainty map obtained directly from the quantile outputs.}
\label{fig:qnn}
\end{figure}

The behaviour (Figure~\ref{fig:qnn}) is exactly what the discontinuity heuristic
predicts. On the full grid the network and the smoother are comparable in
recovering the latent surface (mean squared error $0.083$ versus $0.080$ against
the truth, both well below the direct value $0.166$), but in a narrow band around
the fault the network's error is $0.243$ against the smoother's $0.399$, a $39\%$
reduction: it keeps the cliff that the smoother washes out. The raw quantile band
is too narrow, covering $60\%$ at the nominal $80\%$ level---a familiar
finite-sample failing of flexible quantile estimators---but a single conformal
adjustment raises empirical coverage to $0.75$ at the cost of a wider, honest
interval, illustrating that the distributional output can be made calibrated
cheaply.

\subsection{A Bayesian counterpart: data-augmented quantile regression}
\label{sec:bqr}

The network just described is trained by stochastic gradient descent and needs a
post-hoc conformal step to make its intervals honest. A fully Bayesian
construction supplies calibrated uncertainty automatically and inherits the
horseshoe's tail robustness, through the data-augmentation \emph{mixtures
envelope} of \citet{psSVM2011}. Their observation is that a regularized-loss
pseudo-likelihood of the form $\exp\{-\rho(u)\}$ is often a location-scale
mixture of normals, so that conditioning on a latent scale renders it Gaussian
and brings the Bayesian machinery for linear models to bear. The pinball loss
$\rho_p(u)=u\{p-\ind(u<0)\}$ of quantile regression \citep{koenker1978} is the
canonical example: $\exp\{-\rho_p(u)\}$ is the asymmetric Laplace density, which
admits the representation
\begin{equation}\label{eq:ald}
\varepsilon \;=\; \theta\, \omega + \kappa\sqrt{\omega}\, z, \qquad
\omega\sim\mathrm{Exp}(1),\ z\sim\N(0,1),\qquad
\theta=\frac{1-2p}{p(1-p)},\ \ \kappa^2=\frac{2}{p(1-p)},
\end{equation}
a normal mean--variance mixture \citep{kk2011}. Writing the $p$th quantile
surface as $f_p(\bm s)=\bm b(\bm s)^\top\bbeta$ in a spatial basis $\bm b(\cdot)$
(splines or radial bases in the coordinates), the area model becomes
$y_i=\bm b(\bm s_i)^\top\bbeta+\theta\,\omega_i+\kappa\sqrt{\omega_i}\,z_i$, and
placing a horseshoe prior on $\bbeta$ closes a fully conjugate Gibbs sampler:
\begin{align*}
\omega_i\mid\cdot &\sim \mathrm{GIG}\!\big(\tfrac12,\ \kappa^{-2}r_i^2,\
\theta^2\kappa^{-2}+2\big), \quad r_i=y_i-\bm b(\bm s_i)^\top\bbeta, \\
\bbeta\mid\cdot &\sim \N\!\big(\bm\Sigma\, \bm B^\top \bm W(\bm y-\theta\bm\omega),\
\bm\Sigma\big),\quad
\bm\Sigma=\big(\bm B^\top\bm W\bm B + \mathrm{diag}\{(\tau^2\lambda_j^2)^{-1}\}\big)^{-1},
\end{align*}
with $\bm W=\mathrm{diag}\{(\kappa^2\omega_i)^{-1}\}$ and the local and global
scales $\lambda_j^2,\tau^2$ updated by the same inverse-gamma augmentation as in
Section~\ref{sec:comp}. The latent $\omega_i$ are drawn from a generalized
inverse Gaussian, exactly as the SVM latent variables of \citet{psSVM2011}; the
heavy-tailed horseshoe on $\bbeta$ keeps the fitted quantile surface tail-robust
in the sense of Section~\ref{sec:robust}.

On the fault surface of Figure~\ref{fig:qnn} (here a $20\times20$ lattice, a
$10\times10$ radial basis) the sampler at $p\in\{0.1,0.5,0.9\}$ behaves as the
theory promises for \emph{uncertainty}: the $80\%$ credible band attains
empirical coverage $0.80$, the nominal level, with \emph{no} conformal
correction---in contrast to the network's raw $0.60$. Its weakness is the dual of
the network's strength: the smooth radial basis blurs the discontinuity (median
mean squared error $0.63$ in the fault band, against the network's $0.24$),
though it still denoises overall ($0.124$ versus the direct $0.138$). The two
methods are thus complementary---the network captures sharp geometry, the
data-augmented horseshoe supplies calibrated, tail-robust uncertainty---which is
precisely why the synthesis we flag in Section~\ref{sec:disc}, a horseshoe prior
on the weights or outputs of a spatial network, is attractive.

We stress the limits of this illustration. With $m$ in the dozens, the regime of
most small area applications, a network of this size would overfit hopelessly;
the example uses $m=1024$ precisely because deep learning is data-hungry, and the
controlled lattice avoids the regularization and architecture search that real
deployment demands. The point is qualitative: deep learners offer a distinct,
covariate- and geometry-driven mode of borrowing strength, strongest exactly
where both competing priors are weak---at sharp spatial features---and, paired
with the data-augmented Bayesian construction above, they can be made to deliver
the honest interval estimates that small area estimation requires.

\section{Discussion and future research}\label{sec:disc}

We have placed two ways of borrowing strength in small area estimation on a
common footing and separated them where it matters. Structured spatial smoothing
of the Wakefield type is rate-optimal when area effects vary smoothly across the
map; the horseshoe Fay--Herriot model is tail-robust and minimax when a minority
of areas are exceptional, the regime in which the constant linear shrinkage of
Gaussian random effects is most damaging. Regular variation
(Section~\ref{sec:rv}) is the single property behind the horseshoe's robustness
and its standing as a default, and the known design variances of the survey are
what make the heavy-tailed prior provably minimax, through the standardization of
Section~\ref{sec:minimax}.

Several directions for future research follow naturally.
\begin{enumerate}[leftmargin=1.9em,itemsep=3pt,topsep=3pt]
\item \textbf{Spatial--global--local hybrids.} The model that adds local
horseshoe scales to a BYM2 field, begun by \citet{tangghosh2023}, should inherit
the smooth-regime rate while retaining bounded influence on spikes, provided the
global scale and the spatial variance are made identifiable---for instance by the
\citet{riebler2016} variance split. A contraction theory for this hybrid, of the
kind proved here for the pure horseshoe, is open.
\item \textbf{Calibrated anomaly detection.} The bounded-influence expansion
\eqref{eq:redescend} makes the posterior law of the shrinkage coefficient
$\kappa_i$ a continuous anomaly score, with $\Pr(\kappa_i<\tfrac12\mid\bm y)$ a
Bayesian analogue of the \citet{datta2011,datta2015} random-effect test;
calibrating that score in the heteroscedastic setting, perhaps by conformal
methods as in Section~\ref{sec:deep}, would turn screening into a procedure with
guarantees.
\item \textbf{Unit-level and non-Gaussian models.} The standardization argument
is specific to the area-level model with known $D_i$. Unit-level and generalized
linear mixed extensions, where the working variances are estimated and the exact
reduction of Section~\ref{sec:minimax} breaks, require a perturbation analysis we
have not attempted; benchmarking horseshoe estimates to published direct totals,
so that aggregates are preserved, is a further requirement for official
statistics.
\item \textbf{Deep learning for spatial structure.} Section~\ref{sec:deep}
suggests that deep quantile networks are most valuable exactly where both priors
struggle, at sharp spatial features, and that conformal calibration can supply
the honest bands they otherwise lack. A principled marriage---global--local
shrinkage on the weights or outputs of a spatial network, so that the borrowing
of strength is itself learned but remains tail-robust---is, to us, the most
promising synthesis, and connects this work to the broader program of regularized
deep learning for spatio-temporal data \citep{dixon2019}.
\end{enumerate}

The central message is methodological: the choice between smoothing, shrinkage,
and learning is a choice about the geometry of the unknown surface, and stating
it that way lets the data---through the posterior distribution of the shrinkage
coefficients, or the fitted regions of a network---largely make the choice for
us.

\section*{Acknowledgments}
The authors thank colleagues in the small area estimation and global--local
shrinkage communities for helpful discussions.

\end{document}